\newtheorem{theorem}{Theorem}
\newtheorem{lemma}{Lemma}
\newtheorem{definition}{Definition}
\newtheorem{proposition}{Proposition}
\newtheorem{remark}{Remark}
\crefname{assumption}{Assumption}{Assumptions}
\Crefname{assumption}{Assumption}{Assumptions}
\let\oldReturn\Return
\renewcommand{\Return}{\State\oldReturn}
\newcommand{\R}{\mathbb{R}}
\newcommand{\rejected}{\hat{\mathcal{S}}}
\newcommand{\nulls}{\bar{\mathcal{S}}}
\newcommand{\DAG}{DAG}
\newcommand{\fdr}{FDR}
\newcommand{\fwer}{FWER}
\newcommand{\fdx}{FDX}
\newcommand{\defFDR}{ (FDR)}
\newcommand{\defFWER}{ (FWER)}
\newcommand{\defFDX}{ (FDX)}
\newcommand{\DAGGER}{DAGGER}
\newcommand{\BH}{BH}
\newcommand{\PRDS}{PRDS}
\newcommand{\PRDSadj}{PRDS}
\newcommand{\fullPRDS}{PRDS}
\newcommand{\fullPRDSadj}{PRDS}
\newcommand{\iid}{$\overset{\mathrm{i.i.d.}}{\sim}$}
\newcommand{\psmooth}{\tilde{p}}
\newcommand{\ust}{\uparrow^{\mathrm{st}}}
\newcommand{\E}{\mathbb{E}}
\newcommand{\lb}{\left(}
\newcommand{\rb}{\right)}
\renewcommand{\P}{\mathbb{P}}
\newcommand{\td}{\tilde}
\newcommand{\Z}{\mathbb{Z}}
\def\switchforrestateable#1{%
  \def\magic{\csname #1\endcsname} \magic*
}
\title{Smoothed Nested Testing on Directed Acyclic Graphs}
\date{}
\author[1,2]{\small Jackson H.~Loper\footnote{Equal contribution.}}
\newcommand\CoAuthorMark{\footnotemark[\arabic{footnote}]}
\author[3]{\small Lihua Lei\protect\CoAuthorMark}
\author[4]{\small William Fithian}
\author[5]{\small Wesley Tansey\thanks{\texttt{tanseyw@mskcc.org} (corresponding author)}}
\affil[1]{\footnotesize Data Science Institute, Columbia University, New York, NY, USA}
\affil[2]{\footnotesize Department of Neuroscience, Columbia University, New York, NY, USA}
\affil[3]{\footnotesize Department of Statistics, Stanford University, Palo Alto, CA, USA}
\affil[4]{\footnotesize Department of Statistics, University of California, Berkeley, CA, USA}
\affil[5]{\footnotesize Department of Epidemiology and Biostatistics, Memorial Sloan Kettering Cancer Center, New York, NY, USA}
\begin{document}

\maketitle
\begin{spacing}{1}
\begin{abstract}
% !TEX root = arxiv/main.tex
We consider the problem of multiple hypothesis testing when there is a logical nested structure to the hypotheses. When one hypothesis is nested inside another, the outer hypothesis must be false if the inner hypothesis is false. We model the nested structure as a directed acyclic graph, including chain and tree graphs as special cases. Each node in the graph is a hypothesis and rejecting a node requires also rejecting all of its ancestors. We propose a general framework for adjusting node-level test statistics using the known logical constraints. Within this framework, we study a smoothing procedure that combines each node with all of its descendants to form a more powerful statistic. We prove a broad class of smoothing strategies can be used with existing selection procedures to control the familywise error rate, false discovery exceedance rate, or false discovery rate, so long as the original test statistics are independent under the null. When the null statistics are not independent but are derived from positively-correlated normal observations, we prove control for all three error rates when the smoothing method is arithmetic averaging of the observations. Simulations and an application to a real biology dataset demonstrate that smoothing leads to substantial power gains.%A Python implementation of our correction procedure is publicly available at \url{https://github.com/tansey/nested-testing}.
\end{abstract}

\newpage

% !TEX root = biometrika/main.tex
\section{Introduction}
\label{sec:introduction}
We consider a structured multiple testing problem with a large set of null hypotheses structured along a directed acyclic graph. Each null hypothesis corresponds to a node in the graph and a node contains a false hypothesis only if all ancestors are false. The inferential goal is to maximize power while preserving a target error rate on the entire graph and rejecting hypotheses in a manner that obeys the graph structure. We will focus on boosting power in existing structured testing procedures by using the graph to share statistical strength between the node-level test statistics.

The graph-structured testing problem is motivated by modern biological experiments that collect a large number of samples on which to simultaneously test hundreds or even thousands of hypotheses. In genetics, for example, biologists are building ``genetic interaction maps'' \citep{costanzo:etal:2019:gene-interaction-maps-review}. These large networks outline how different genes rely on each other to produce or prevent certain phenotypes such as cell growth and death. Recent advances such as CRISPR-Cas9 \citep{wang:etal:2014:crispr-knockout-screens} and Perturb-Seq \citep{dixit:etal:2016:perturb-seq} enable biologists to experimentally disable hundreds of genes, both in isolation and in subsets of two or even three genes at once \citep{kuzmin:etal:2018:yeast-trigenic-interactions}. Testing the thousands of candidate sets of genes for differences from a control population is a classic multiple hypothesis testing problem.

Unlike the classical multiple testing problem, there is a rich structure to genetic interaction experiments. The biologist wishes to understand the sets as they relate to individual genes and subsets. For example, two genes may not produce a decrease in cell survival rates if only one of them is knocked out. However, knocking both out simultaneously in the same cell may lead to a sudden drop in survival rate \citep{costanzo:etal:2019:gene-interaction-maps-review}. In these cases, the biologist would consider the two genes to be ``interacting'' and thus the pair would be known as a ``synthetic lethal'' combination.

Beyond discovering the exact combination that leads to lethality, the biologist would also now flag the individual genes as having the potential to contribute to synthetic lethality, even though the genes cannot do so on their own. This potential has scientific and medicinal importance. For the scientist, if a gene is known to have the potential to contribute to cell death, it may be worth investigating it in the context of other knocked out genes that have not yet been considered. In medicine, if a specific type of cancer is seen to have a mutation in the first gene, a drug may be developed that inhibits the second gene, thereby killing the tumor cells. Thus, it is important to learn not just the exact lethal combinations, but also the entire ontology of genetic effects. %We will return to the genetic interaction map example in \cref{sec:results}.

% In developmental biology, for example, next-generation sequencing enables biologists to measure thousands of gene expression levels across tens of thousands of cells \citep{potter:2018:scrna-developmental-bio}. The population of measured cells is typically heterogeneous, including samples taken from different subpopulations, such as different tissue locations or different time points in the development of the organism. Biologists studying these subpopulations wish to discover which cellular components are differentially expressed between the groups. Testing each of the thousands of genes for differential expression is a classic multiple hypothesis testing problem. Yet testing typically does not stop here: the biological function of a cell is much richer than individual gene expression.

% The full inferential task in modern biology experiments is a structured testing problem involving not just genes but an entire ontology of functionality containing pathways and other higher-order biological units. Pathways represent mechanistic processes in a cell, such as cell division or immune response, and many genes and lower-level pathways may be involved in a single pathway. The set of all genes and pathways involved in human cellular behavior, and their compositional structure, forms an ontology that has been rigorously mapped out through previous experiments \citep{ashburner:2000:geneontology}.
Modern biological experiments aim to test not just individual genes or gene sets, but all entries in this ontology. In this structured testing problem, lower-level hypotheses are nested within higher-order hypotheses: if a gene or set of genes is truly associated with a change in phenotype, it logically entails that the subsets are also associated. Here we consider this nested testing problem in the general case, where a directed acyclic graph encodes an ontology of logically nested hypotheses; we will return to the genetic interact map example in \cref{sec:results}.

The key insight in this paper is that knowing the graph structure should increase power in the testing procedure. If every node in the graph represents an independent hypothesis test, then nodes should be able to borrow statistical strength from their ancestors and descendants. The signal at a nonnull node may be too weak to detect on its own, but the strength of evidence when combined with the evidence from its nonnull children may be sufficient to reject the null hypothesis. This strength sharing can also flow in the opposite direction, with nonnull parent nodes boosting power to detect nonnull children. For instance, if the nonnull signal attenuates smoothly as a function of depth in the graph, it may be possible to learn this function. The test statistic for a node can then be adjusted using the estimated signal predicted by the function learned from the ancestors. Whether using the descendants, ancestors, or both, sharing statistical strength creates dependence between test statistics and therefore must be carried out thoughtfully so as to enable control of the target error rate.

This leads us to develop a smoothing approach that implements this sharing of statistical strength between connected test statistics in the directed acyclic graph while still controlling the target error rate. We focus on descendant smoothing as it requires less prior knowledge of the graph and is thus more broadly applicable. We prove that the descendant smoothing approach yields adjusted $p$-values that are compatible with three different selection algorithms from the literature on nested testing with directed acyclic graphs \citep{meijer:goeman:2015:dag-fwer,genovese:wasserman:2006:fdx,ramdas:etal:2019:dagger}. Together, these techniques enable us to smooth the $p$-values and control familywise (Type I) error, the false discovery rate, or the false exceedance rate. Simulated and real data experiments confirm that smoothing yields substantially higher power across a wide range of alternative distributions and graph structures.

% !TEX root = biometrika/main.tex
\section{Background}
\label{sec:background}
There is a wealth of recent work on structured and adaptive testing. We focus on the most relevant work and refer the reader to \citet{lynch:2014:dissertation} for a comprehensive review of testing with logically nested hypotheses. 

Preserving the logical nesting structure after selection is the domain of structured testing \citep{shaffer1995multiple}. Methods for structured testing can be categorized based on their assumptions about the structure of the graph and the type of target error rate. For familywise error rate control, \citet{rosenbaum2008testing} propose a generic test on a chain graph; \citet{meinshausen:2008:tree-fwer} propose a procedure for testing on trees in the context of variable selection for linear regression; \citet{goeman2008multiple} propose the focus-level method, blending Holm's procedure \citep{holm:1979:bonferroni-holm} and closed testing \citep{marcus1976closed}, for testing on general directed acyclic graphs; and \citet{meijer:goeman:2015:dag-fwer} propose a more flexible method for arbitrary directed acyclic graphs based on the sequential rejection principle which unifies the aforementioned tests \citep{goeman:solari:2010:sequentialrp}. For false discovery rate control, the Selective Seqstep \citep{barber2015controlling}, Adaptive Seqstep \citep{lei2016power}, and accumulation tests \citep{li:barber:2017:accumulation_tests} enforce the logical constraint on the rejection set for chain graphs, though they do not require the logical constraint to actually hold; \citet{yekutieli2008hierarchical} propose a recursive procedure for testing on trees which provably controls the false discovery rate up to a computable multiplicative factor; \citet{lynch2016procedures} adapt the generalized step-up procedure to handle trees, which is further extended by \citet{ramdas:etal:2019:dagger} to general directed acyclic graphs.

% The adaptive testing literature deals with the problem of using the data to boost power, as does our smoothing procedure.
% In contrast to adaptive testing, in the nested setting we are given the structure in the form a directed acyclic graph encoding logical nesting rather than statistical dependence. While these adaptive methods could potentially be used to increase power in the nested testing problem, they would not admit selection procedures that guarantee the nested structure is preserved. That is, even if the graph structure were used to boost power, they are not able to enforce that rejecting a hypothesis logically entails rejecting all of its ancestors.

There is a nascent literature on adaptive testing methods that preserve nested hypothesis structure. \citet{lei:etal:2017:star} describe an interactive adaptive procedure that partially masks $p$-values, enabling the scientist to explore the data and unveil its structure, then use the masked bits to perform selection while controlling the false discovery rate at the target level. This interactive approach is able to preserve the nested hypothesis structure and take advantage of covariates, but comes at the cost of splitting the $p$-values, potentially costing power. Further, the method is only able to control the false discovery rate for independent $p$-values; we will consider a much broader class of error metrics as well as some dependent $p$-value scenarios. The application of descendant smoothing  was also studied in \citet{vovk2020combining} for familywise error control and in \citet{ramdas:etal:2019:p-filter} for false discovery rate control, though the latter requires a conservative correction to handle the dependence induced by aggregation. A suite of methods \citep{scott:etal:2015:fdr-regression,xia:etal:2017:neuralfdr,tansey:etal:2018:bb-fdr,lei:fithian:2018:adapt,li:barber:2019:sabha} enable machine learning models to leverage side information like covariates that learn a prior over the probability of coming from the alternative; however, these methods do not enforce the logical constraint on the rejection set. % which consider a variant of the Bonferroni-Holm procedure \citep{holm:1979:bonferroni-holm} with a closure principle. However, the procedure does not involve any graph structure and hence cannot be applied to nested testing problems on directed acyclic graphs. 

Other methods go beyond classical error metrics, defining and controlling a structured error metric. \citet{benjamini2014selective} propose a method to control the average false discovery proportion over selected groups for a two-level graph, which is further extended by \citet{bogomolov:etal:2017:simes-tree} to general graphs. The p-filter \citep{barber:ramdas:2017:p-filter} and the multilayer knockoff filter \citep{katsevich:sabatti:2019:multilayer-knockoffs} are able to control the group-level false discovery rate simultaneously for potentially-overlapping partitions of hypotheses. Unlike the methods described in the last paragraph, for which the internal node in the graph can encode an arbitrary hypothesis, these four works seek to handle a special hierarchy where each internal node encodes the intersection of a subset of hypotheses on the leaf nodes. Our proposed approach is fundamentally different from these methods since we allow internal nodes to encode non-intersection hypotheses and our goal is to control the overall target error rate.

Rather than competing with methods for structured testing, our smoothing procedures are complementary. As we will show in \cref{sec:testing}, the descendant smoothing procedure is compatible with controlling familywise error via \citet{meijer:goeman:2015:dag-fwer}, the false exceedance rate via an extension to the method of \citet{genovese:wasserman:2006:fdx}, and the false discovery rate via the method of \citet{ramdas:etal:2019:dagger}. In the case of the latter method, we explicitly show that the smoothed $p$-values are positive regression dependent each on the subset of nulls \citep{benjamini:yekutieli:2001:fdr-dependence}, resolving the issue of dependence under the null after smoothing and hence avoiding conservative corrections as suggested by \citet{ramdas:etal:2019:dagger}. The benefit of smoothing is not to enable a new structured hypothesis testing procedure, but to make existing principled methods such as these more powerful by leveraging the structure of the problem. 

% \begin{itemize}
%     \item AdaPT 
%     \item HingeExp and accumulation tests
%     \item Meinhausen's hierarchical testing procedure and the Meijer and Goeman extension to DAGs
%     \item Structured Holm
%     \item Lynch's SCR-DAG and BH-DAG
%     \item DAGGER
% \end{itemize}
%%% Local Variables:
%%% mode: latex
%%% TeX-master: t
%%% End:

% !TEX root = biometrika/main.tex
\section{Smoothing Nested Test Statistics}
\label{sec:method}
\subsection[Smoothed p-values]{Smoothed $p$-values}
\label{subsec:method:smoothed-p}
Let $\{H_1,\ldots, H_n\}$ be a large set of null hypotheses. For each $v\in\{1,\ldots, n\}$ we observe a random variable $p_v$; if the null hypothesis $H_v$ holds, we assume that $p_v$ is super-uniform, i.e. $\mathrm{pr}(p_v \leq c) \leq c$ for any $c\in [0, 1]$. In some cases, we will assume the null $p$-value is uniform in $[0, 1]$. % We will generally assume that the null $p$-values are precisely uniform (i.e. $\mathrm{pr}(p_v \leq c) = c$) and independent, though a few exceptions will be considered.  We always assume that the null $p$-values are independent of the nonnull $p$-values. 
We will use $\mathcal{V}=\{1,\ldots,n\}$ to index all of the null hypotheses, and let
\[
\bar{\mathcal{S}} = \{v\in \mathcal V:\ H_v\ \mathrm{is\ true}\} \, , \quad \mathcal{S} = \mathcal{V}\setminus\bar{\mathcal{S}} = \{v\in \mathcal V:\ H_v\ \mathrm{is\ false}\}
\]
denote the unknown set of null hypotheses which hold and do not hold, respectively. 

Our task is to estimate which hypotheses are false: to produce an estimator $\hat{\mathcal{S}}$ of the set $\mathcal{S}$ from the random variables $\{p_v\}_{v \in \mathcal{V}}$. To help estimate $\mathcal{S}$, we have access to a directed acyclic graph $\mathcal{G} = (\mathcal{V}, \mathcal{E})$ whose edges encode constraints on the hypotheses in the following way: if $H_v$ is false and $w$ is an ancestor of $v$ in $\mathcal{G}$, then $H_w$ must also be false.  We would like to use these logical constraints to our advantage in estimating $\mathcal{S}$.  

To do so, we will use the graph $\mathcal{G}$ to transform the $p$-values into a new set of values and then apply existing structured testing procedures to the transformed values.  We call the transformed values smoothed $p$-values (denoted $\psmooth$), because they will be formed by various kinds of averages of the original $p$-values.  In the most general sense, the transformed values are created by the following process.  First, we select an arbitrary collection of smoothing functions $f_v:\ \mathbb{R}^{|\mathcal{V}|}\rightarrow \mathbb{R}$ (specific examples are given below).  For each $v\in\mathcal{V}$ let $\mathcal{C}_v$ denote the union of $v$ with all of its descendants in the graph $\mathcal G$. The smoothed $\psmooth$-value for node $v$ is then given by
\begin{equation}
  \label{eq:psmooth_v}
  F_v(c;x_{\mathcal{V}\backslash\mathcal{C}_v}) \triangleq \mathrm{pr}(f_v( u_{\mathcal{C}_v},x_{\mathcal{V}\backslash\mathcal{C}_v}) \leq c),\quad \psmooth_v \triangleq F_v(f_v(p_{\mathcal{C}_v},p_{\mathcal{V}\backslash\mathcal{C}_v});p_{\mathcal{V}\backslash\mathcal{C}_v}),
\end{equation}
where $u_{\mathcal{C}_v} = \{ u_w\}_{w\in\mathcal{C}_v}$ \iid{} $\mathrm{Uniform}[0,1]$. For the tailored functions $f_v$ discussed in \cref{subsec:method:descendants}, $F_v$ has a closed form expression. In general, it can be computed to arbitrary accuracy for any $f_v$ using Monte Carlo simulations.  The resulting random variable $\psmooth_v$ is a valid $p$-value for the hypothesis $H_v$,
\begin{restatable}[$\psmooth$-values are super-uniform]{lemma}{smoothp} \label{lem:smoothp}
  Assume that the null $p$-values are mutually independent and independent of nonnull $p$-values.
  \begin{enumerate}[(a)]
  \item If the null $p$-values are \iid{} as $\mathrm{Uniform}[0, 1]$, $\psmooth_v$ is super-uniform for every $v\in\bar{\mathcal{S}}$.
  \item If the null $p$-values are super-uniform and $f_v$ is nondecreasing in $p_{\mathcal{C}_v}$ for any value of $p_{\mathcal{V}\backslash\mathcal{C}_v}$, then $\psmooth_v$ is super-uniform for every $v\in\bar{\mathcal{S}}$.
  \end{enumerate}
    % $\mathrm{pr}(\psmooth_v \leq \alpha) \leq \alpha$ for every $v\in\bar{\mathcal{S}}$. 
\end{restatable}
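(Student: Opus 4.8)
The plan is to prove both parts by conditioning on the non-descendant $p$-values $p_{\mathcal{V}\backslash\mathcal{C}_v}$ and then invoking a single elementary fact about distribution functions. First I would record the structural observation that drives the argument: if $v\in\bar{\mathcal{S}}$, the DAG logical constraint forces every descendant of $v$ to be null as well, so $\mathcal{C}_v\subseteq\bar{\mathcal{S}}$; hence $p_{\mathcal{C}_v}$ consists entirely of null $p$-values, which are mutually independent and independent of all nonnull $p$-values by hypothesis. In particular the whole vector $p_{\mathcal{C}_v}$ is independent of $p_{\mathcal{V}\backslash\mathcal{C}_v}$. The tool I would use is the general probability integral transform inequality: for any cumulative distribution function $G$ on $\mathbb{R}$ and any $X\sim G$, $\mathrm{pr}(G(X)\le c)\le c$ for every $c\in[0,1]$, with equality when $G$ is continuous (one proves it by noting $\{x:G(x)\le c\}$ is downward closed and bounding the probability of the associated half-line by its $G$-measure). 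The relevance is that, for each fixed $x$, the map $F_v(\,\cdot\,;x)$ in \eqref{eq:psmooth_v} is exactly the cumulative distribution function of the real random variable $f_v(u_{\mathcal{C}_v},x)$.

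For part (a), condition on $p_{\mathcal{V}\backslash\mathcal{C}_v}=x$. By the independence just noted, the conditional law of $p_{\mathcal{C}_v}$ equals its marginal law, which under (a) is a vector of i.i.d.\ $\mathrm{Uniform}[0,1]$ variables; hence $f_v(p_{\mathcal{C}_v},x)$ has the same distribution as $f_v(u_{\mathcal{C}_v},x)\sim F_v(\,\cdot\,;x)$. Applying the transform inequality with $G=F_v(\,\cdot\,;x)$ gives $\mathrm{pr}(\psmooth_v\le c\mid p_{\mathcal{V}\backslash\mathcal{C}_v}=x)\le c$ for every $x$, and integrating over $x$ yields $\mathrm{pr}(\psmooth_v\le c)\le c$.

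For part (b), each coordinate of $p_{\mathcal{C}_v}$ is now only super-uniform, but the coordinates remain independent, so I would couple $p_{\mathcal{C}_v}$ with an i.i.d.\ $\mathrm{Uniform}[0,1]$ vector $u_{\mathcal{C}_v}$ coordinatewise---quantile coupling on each coordinate, then taking the product---so that $p_w\ge u_w$ almost surely for every $w\in\mathcal{C}_v$. Since $f_v(\,\cdot\,,x)$ is nondecreasing in the $\mathcal{C}_v$ argument and $F_v(\,\cdot\,;x)$ is nondecreasing, the composition $y_{\mathcal{C}_v}\mapsto F_v(f_v(y_{\mathcal{C}_v},x);x)$ is monotone, so under the coupling $\psmooth_v\le F_v(f_v(u_{\mathcal{C}_v},x);x)$ almost surely conditionally on $p_{\mathcal{V}\backslash\mathcal{C}_v}=x$. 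The right-hand side is super-uniform by the part (a) argument, so conditionally $\psmooth_v$ is stochastically dominated by a super-uniform variable; integrating over $x$ finishes.

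The conceptual crux, and the only place the graph structure enters, is the propagation of nullity: because $v\in\bar{\mathcal{S}}$ forces all of $\mathcal{C}_v$ into $\bar{\mathcal{S}}$, the smoothing pool $p_{\mathcal{C}_v}$ may be treated as a realization of $u_{\mathcal{C}_v}$---exactly in case (a), and up to stochastic inflation in case (b)---after which only the conditional integral transform is needed. The step requiring the most care is the coupling in (b): one must verify that the joint coupling can be built coordinate by coordinate, which is legitimate precisely because the null $p$-values indexed by $\mathcal{C}_v$ are mutually independent, and that the monotonicity of $f_v$ and of $F_v(\,\cdot\,;x)$ compose so as to push $\psmooth_v$ downward rather than upward.
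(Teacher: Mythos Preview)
Your approach is essentially the same as the paper's: both proofs observe that $\mathcal{C}_v\subseteq\bar{\mathcal{S}}$, condition on $p_{\mathcal{V}\backslash\mathcal{C}_v}$, identify $p_{\mathcal{C}_v}$ with $u_{\mathcal{C}_v}$ in distribution for part~(a), and use coordinatewise stochastic dominance plus monotonicity for part~(b). The probability integral transform inequality you invoke is exactly what the paper uses implicitly when it says ``by definition, $F_v(f_v(u_{\mathcal{C}_v};p_{\mathcal{V}\backslash\mathcal{C}_v}))$ is superuniform.''

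There is, however, a sign slip in your part~(b) that would invalidate the conclusion as written. Under the quantile coupling you describe, super-uniformity of each $p_w$ gives $p_w\ge u_w$ almost surely; since the composition $y_{\mathcal{C}_v}\mapsto F_v(f_v(y_{\mathcal{C}_v},x);x)$ is nondecreasing, you obtain
\[
\psmooth_v = F_v\bigl(f_v(p_{\mathcal{C}_v},x);x\bigr)\ \ge\ F_v\bigl(f_v(u_{\mathcal{C}_v},x);x\bigr),
\]
not $\le$. Thus $\psmooth_v$ stochastically \emph{dominates} a super-uniform variable, which is what yields $\mathrm{pr}(\psmooth_v\le c)\le c$. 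The inequality you wrote, ``$\psmooth_v\le F_v(f_v(u_{\mathcal{C}_v},x);x)$ almost surely'' and the ensuing claim that ``$\psmooth_v$ is stochastically dominated by a super-uniform variable,'' goes the wrong way: being dominated by a super-uniform variable does not imply super-uniformity. The paper's proof records the correct direction as $\td{p}_v\succeq F_v(f_v(u_{\mathcal{C}_v},p_{\mathcal{V}\backslash\mathcal{C}_v}))$. Once you flip the sign, your argument is complete and matches the paper's.
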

We defer the proof to the appendix.  
\cref{lem:smoothp} allows us to construct a hypothesis test with Type I error control using the smoothed $\psmooth$-values. The value at $\psmooth_v$ is a function of $p_v$ and all its ancestors and descendants. This enables $\psmooth$-values to borrow statistical strength from each other and, depending on the choice of $\{f_v\}$, can lead to more powerful hypothesis tests.

\subsection{Descendant smoothing via merging}
\label{subsec:method:descendants}
The optimal smoothing functions $\{f_v\}_{v \in \mathcal{V}}$ are application specific. They depend on the structure of the graph, alternative hypothesis, and prior knowledge about the experiments. We focus our investigation on smoothing functions that use only the descendants at each node. Methods using ancestor nodes are left for future work. For notational convenience, we will write $f_v(p_{\mathcal{C}_v})$ for $f_v(p_{\mathcal{C}_v}, p_{\mathcal{V}\backslash \mathcal{C}_v})$ for descendant smoothing hereafter.

Descendant smoothing functions combine the $p$-values of descendant nodes with the current node to obtain smoothed $p$-values.  Many different strategies for combining $p$-values have been proposed in the literature, and it is beyond the scope of this work to investigate how each of them might perform as a descendant smoothing function. We instead consider a general class of smoothing functions derived by merging,
\[
f_v(p_{\mathcal{C}_v}) = G_v\left(\sum_{c\in \mathcal{C}_v}H_{v, c}^{-1}(p_c)\right) \, .
\]
This merging strategy covers many well-known methods for merging (independent) $p$-values. For instance, a Stouffer smoothing strategy would merge $p$-values following the method of \citet{stouffer:etal:1949:merging}, converting the $p$-values to $z$-scores and adding them,
  \[
    f_v(p_{\mathcal{C}_v}) = \sum_{c\in \mathcal{C}_v}\Phi^{-1}(p_c)
  \]
where $\Phi^{-1}(\cdot)$ is the distribution function of a standard normal. If $v$ corresponds to a null hypothesis, the logical constraint implies that $\{p_c: c\in\mathcal{C}_v\}$ are independent and uniformly distributed. As a result, $F_v(\cdot)$ is the distribution of a mean-zero normal distribution with variance $|\mathcal{C}_v|$.

A Fisher smoothing strategy would merge using the method of \citet{fisher:stats-methods} by considering the product of the $p$-values,
  \[
    f_v(p_{\mathcal{C}_v}) = \sum_{c\in \mathcal{C}_v}2\log(p_c).
  \]
When $H_v$ is null, $-f_v(p_{\mathcal{C}_v})$ has a chi-square distribution with degree of freedom $2|\mathcal{C}_v|$.
% where $\Psi_{2|\mathcal{C}_v|}$ is the distribution of a $\chi^2_{2|\mathcal{C}_v|}$.
Fisher's method tends to have high power in a wide range of scenarios \citep[e.g.][]{littell1971asymptotic}, though other methods will be more powerful for certain alternative distributions. Other popular methods include Tippett's method, which takes the minimum $p$-values \citep{tippett1931methods, bonferroni1936teoria}; R\"{u}ger's method, which is based on an order statistic \citep{ruger1978maximale}; Simes' method \citep{simes1986improved} or higher criticism method \citep{donoho2004higher}, which combine all order statistics; the Cauchy combination, which aggregates inverse-Cauchy transformed $p$-values \citep{liu2020cauchy}; and the generalized mean aggregation method which aggregates monomial-transformed $p$-values \citep{vovk2020combining, vesely2021permutation}. \citet{heard:rubin-delanchy:2018:p-merging} provide a Neyman-Pearson analysis of optimal alternative hypotheses for each smoothing function; see also \cite{vovk2020admissible} for an admissibility analysis of different $p$-value aggregation methods under general dependence.

This class of descendant smoothing functions are convenient to work with both computationally and mathematically. Many have a closed form distribution that enables fast calculation of the smoothed statistic. As we will see, theoretical properties for a large class of smoothing methods can also be proven, making them compatible with a broad set of selection methods.

 % Though we do not explore such cases in this paper,  our results extend to this scenario as well.  

% \textbf{TODO: simple example figure here showing power for a single example maybe?}

%%% Local Variables:
%%% mode: latex
%%% TeX-master: "testing"
%%% End:

% !TEX root = biometrika/main.tex
\section{Testing with Smoothed Statistics}
\label{sec:testing}

\subsection{Familywise error rate control}
\label{subsec:testing:fwer}

The familywise error rate\defFWER{}, controlled at level $\alpha$, ensures the probability that even one null hypothesis was rejected is at most $\alpha$, i.e. $\mathrm{pr}(\mid \rejected \cap \nulls \mid \geq 1) \leq \alpha$. It is a stringent error metric for multiple testing and useful in high-stakes decision-making where false positives are prohibitive.

To estimate $\mathcal{S}$ while controlling the \fwer{}, we can directly apply the algorithm of Meijer and Goeman to the $\psmooth$-values (\cite{meijer:goeman:2015:dag-fwer}), outlined in Appendix \ref{subapp:fwer}. The procedure provably controls the familywise error rate so long as the null p-values  are marginally all (super-)uniform. Therefore, by Lemma \ref{lem:smoothp}, even though the smoothed p-values are dependent, this correction is still valid.
% as formalized in the following lemma.
% \begin{restatable}[Smoothed Meijer-Goeman]{lemma}{smoothfwerworks} \label{lem:smoothfwerworks} Meijer and Goeman's estimators for $\mathcal{S}$ (cf.\ \cite{meijer:goeman:2015:dag-fwer}) applied to the smoothed $\psmooth$-values  control \fwer{} at the nominal level.
% \end{restatable}

The Lemma \ref{lem:smoothp} result is very general for controlling the \fwer{}. It admits any choice of function over the descendant and ancestor $p$-values. This is possible because the inner-loop of the \citet{meijer:goeman:2015:dag-fwer} algorithm relies on a Bonferroni correction. The union bound strategy of the Bonferroni correction places no requirement on the dependency structure of the statistics.% , so long as they are marginally all (super-)uniform.

\subsection{False exceedance rate control}
\label{subsec:testing:fdx}

The false exceedance rate\defFDX{}, controlled at level $(\gamma, \alpha)$, ensures the false discovery proportion is greater than $\gamma$ with probability no greater than $\alpha$, i.e. $\mathrm{pr}(|\rejected \cap \nulls| / |\rejected| > \gamma) \leq \alpha$. It is less stringent than the familywise error rate. 

\cite{genovese:wasserman:2006:fdx} propose a generic procedure that turns a familywise error control method into a false exceedance rate control method. Specifically, starting from any rejection set $\rejected_0$ that controls the familywise error rate at level $\alpha$, we can append any subset $\mathcal{S'}\subset \mathcal{V}\setminus \rejected_0$ onto $\rejected_0$. Then the expanded rejection set $\rejected_0 \cup \rejected'$ controls the false exceedance rate if $|\rejected'| \le |\rejected_0| \gamma / (1 - \gamma)$ \citep[][Theorem 1]{genovese:wasserman:2006:fdx}. The proof is straightforward: on the event that $\rejected_0$ contains no false discovery, which has probability at least $1 - \alpha$, the false discovery proportion is at most $|\rejected'| / (|\rejected'| + |\rejected_0|) \le \gamma$.

To guarantee that $\rejected$ satisfies the logical constraint, we apply \cite{meijer:goeman:2015:dag-fwer}'s method to obtain $\rejected_0$, and then append another subset which does not violate the constraint. Since there is no restriction on $\rejected'$, we can greedily add hypotheses based on the topological ordering to maintain the constraint. The procedure is outlined in Appendix \ref{subapp:fdx}. 

% To estimate $\mathcal{S}$ while controlling the \fdx{}, we first apply the \fwer{} control algorithm of \citet{meijer:goeman:2015:dag-fwer}.  Starting from the roots of the graph $\mathcal{G}$, we then reject additional hypotheses until we cross the exceedance rate under the assumption that all our original discoveries were correct and all new discoveries are false.  Algorithm~\ref{alg:fdx} outlines the process.  The following lemma ensures that Algorithm 1 provides false exceedance rate control.
%TODO: why doesn't cref work for referencing alg:fdx?
% \begin{restatable}[False exceedance control from familywise error control]{lemma}{fdx-control}
% \label{lem:fdx} Let $\hat{\mathcal{S}}_{\mathrm{FWER}(\alpha)}$ denote an estimator which controls \fwer{} at the $\alpha$ level.  Let $\hat{\mathcal{S}}_{\mathrm{FDX}(\alpha)}$ denote any other estimator such that $\hat{\mathcal{S}}_{\mathrm{FDX}(\gamma,\alpha)}\supseteq\hat{\mathcal{S}}_{\mathrm{FWER}(\alpha)}$ and $|\hat{\mathcal{S}}_{\mathrm{FDX}(\gamma,\alpha)} \backslash \hat{\mathcal{S}}_{\mathrm{FWER}(\alpha)}|/|\hat{\mathcal{S}}_{\mathrm{FDX}(\gamma,\alpha)}| \leq \gamma$.  Then $\hat{\mathcal{S}}_{\mathrm{FDX}(\gamma,\alpha)}$ controls the false exceedance rate at level $\gamma,\alpha$.
% \end{restatable}
% The algorithm and lemma extend the result from \citet{genovese:wasserman:2006:fdx} to the \DAG{} testing case. This extension ensures that, in addition to overall control of the \fdx{}, the nesting structure is preserved.

\subsection{False discovery rate control}
\label{subsec:testing:fdr}

The false discovery rate\defFDR{}, controlled at level $\alpha$, ensures the expected proportion of rejected hypotheses that are actually null is at most $\alpha$, i.e. $\mathbb{E}[|\rejected \cap \nulls| / (1 \vee |\rejected|)] \leq \alpha$. This is one of the most popular error metrics in large scale inference.

To estimate $\mathcal{S}$ while controlling the \fdr{}, we apply the Greedily Evolving Rejections on Directed Acyclic Graphs method proposed by \cite{ramdas:etal:2019:dagger}. The method works with the original test statistics by extending the \citet{benjamini:hochberg:1995:bh} procedure to \DAG{}s. As with \BH{}, it is only guaranteed to control the \fdr{} if the $\psmooth$-values satisfy a special property: \fullPRDS{}.  Specifically, for any $x,y\in\mathbb{R}^m$, let $x\preceq y$ signify that $x_i \leq y_i$ for each $i$.  A set $D\in \mathbb{R}^m$ is called non-decreasing if $x \preceq y,x\in D \implies y\in D$.  A random object $X\in\mathbb{R}^m$ is said to satisfy \PRDS{} on $T\subset\{1,\cdots,m\}$ if $t\mapsto \mathbb{P}(X\in D|X_i=t)$ is non-decreasing for every non-decreasing set $D$ and every index $i\in T$.  

Various multiple testing procedures have been proven to control the \fdr{} under \PRDS{}. However, only a few concrete examples have been shown to satisfy this condition, such as the one-sided testing problem with nonnegatively correlated Gaussian statistics and the two-sided testing problem with t-statistics derived from uncorrelated z-values \citep{benjamini:yekutieli:2001:fdr-dependence}. This limits the practical usefulness of the theoretical guarantees established under this condition.

In Appendix \ref{sec:PRDS}, we establish a general theory of \fullPRDS{} based on classical stochastic ordering theory, a widely studied area in reliability theory \citep[e.g.][]{efron1965increasing, kamae1977stochastic, block1987probability}. In \cref{subsec:method:descendants}, we introduced a class of $p$-values derived from descendant smoothing techniques. Theorem \ref{thm:combination_test} shows that a broad class of these smoothed $p$-values satisfy \fullPRDS{}. 

\begin{theorem}\label{thm:combination_test}
  Assume that the null $p$-values are uniformly distributed in $[0, 1]$, mutually independent and independent of all nonnull $p$-values. For each node $v$ and its descendant $c\in \mathcal{C}_v$, let $H_{v, c}(x): \R\mapsto [0, 1]$ be a monotone increasing function with the first-order derivative $H_{v, c}'(x)$ being log-concave, and $G_v(x): \R\mapsto \R$ be a monotone increasing function. Further let
  \[f_v(p_{\mathcal{C}_v}) = G_v\left(\sum_{c\in \mathcal{C}_v}H_{v, c}^{-1}(p_c)\right).\]
  Then the smoothed $p$-values are \fullPRDS{}.
\end{theorem}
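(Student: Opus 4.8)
The plan is to reduce the \PRDS{} property to a classical fact about conditioning independent log-concave variables on their sum, in the spirit of \citet{efron1965increasing} (the general version of which is developed in Appendix~\ref{sec:PRDS}). Throughout, fix a null node $i\in\bar{\mathcal{S}}$; we must show that $t\mapsto \P\big((\psmooth_v)_{v\in\mathcal{V}}\in D \mid \psmooth_i=t\big)$ is nondecreasing for every non-decreasing set $D$. The essential first observation is that the logical constraint forces every node in $\mathcal{C}_i$ to be null: if some descendant $c$ of $i$ had $H_c$ false, then its ancestor $i$ would have $H_i$ false, a contradiction. Hence $\{p_c:c\in\mathcal{C}_i\}$ is a collection of i.i.d.\ $\mathrm{Uniform}[0,1]$ variables that is independent of the block $\{p_c : c\notin\mathcal{C}_i\}$.

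Next I would set up the reduction. Write $S_i=\sum_{c\in\mathcal{C}_i}H_{i,c}^{-1}(p_c)$, so that $\psmooth_i=F_i(G_i(S_i))$ is a nondecreasing function of $S_i$; consequently conditioning on $\psmooth_i=t$ is, up to the monotone reparametrization $s=G_i^{-1}(F_i^{-1}(t))$, the same as conditioning on $S_i=s$ with $s$ increasing in $t$. It therefore suffices to show that the conditional law of the whole vector $(\psmooth_v)_{v\in\mathcal{V}}$ given $S_i=s$ is stochastically nondecreasing in $s$ for the coordinatewise order, since integrating the monotone indicator $\mathbf{1}_D$ against a stochastically increasing family of laws yields a nondecreasing function of $s$. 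Each $\psmooth_v=F_v\big(G_v(\sum_{c\in\mathcal{C}_v}H_{v,c}^{-1}(p_c))\big)$ is a coordinatewise nondecreasing function of the $p$-value vector, $S_i$ depends only on the block $(p_c)_{c\in\mathcal{C}_i}$, and the complementary block is independent of that block (hence of $S_i$); therefore conditioning on $S_i=s$ leaves the complementary block untouched and independent, so it is enough to prove the single statement that the conditional law of $(p_c)_{c\in\mathcal{C}_i}$ given $S_i=s$ is stochastically nondecreasing in $s$ (applying a coordinatewise nondecreasing map and appending an independent block both preserve stochastic order).

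Finally I would prove that statement by a change of variables. Put $W_c=H_{i,c}^{-1}(p_c)$ for $c\in\mathcal{C}_i$; since $H_{i,c}$ is increasing with $H_{i,c}'$ log-concave, $W_c$ has the log-concave (P\'olya-frequency) density $H_{i,c}'$, the $W_c$ are independent, and $S_i=\sum_{c\in\mathcal{C}_i}W_c$. By the classical monotonicity property of independent log-concave variables conditioned on their sum --- that $\mathbb{E}\big[\phi(W)\mid \sum_{c}W_c=s\big]$ is nondecreasing in $s$ for every coordinatewise nondecreasing $\phi$, which is exactly the content established in Appendix~\ref{sec:PRDS} --- the conditional law of $(W_c)_{c\in\mathcal{C}_i}$ given $S_i=s$ is stochastically nondecreasing in $s$, and hence so is that of $(p_c)_{c\in\mathcal{C}_i}=(H_{i,c}(W_c))_{c\in\mathcal{C}_i}$ because each $H_{i,c}$ is increasing. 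Chaining back through the reductions of the previous paragraph gives the \PRDS{} property on $\bar{\mathcal{S}}$.

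I expect the main obstacle to be the log-concave conditioning step itself: establishing (or precisely invoking) the Efron-type result that conditioning independent log-concave variables on their sum is stochastically monotone in the value of the sum, and verifying it applies with heterogeneous densities $H_{i,c}'$ and at the level of the joint law of the block rather than merely each marginal. A secondary technical point is justifying the passage through regular conditional distributions when $F_i\circ G_i$ has flat stretches, so that ``conditioning on $\psmooth_i=t$'' translates cleanly into ``conditioning on $S_i=s$'' and the stochastic-order argument goes through; this is where the assumption that the null $p$-values are exactly uniform, together with continuity and strict monotonicity of the $G_v$ and $H_{v,c}$, is convenient.
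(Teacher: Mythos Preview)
Your proposal is correct and follows essentially the same route as the paper: change variables to $W_c=H_{i,c}^{-1}(p_c)$ so that the $W_c$ are independent with log-concave densities, invoke Efron's result that $(W_c)_{c\in\mathcal{C}_i}$ is stochastically increasing in its sum $S_i$, append the independent complementary block, and push forward through the coordinatewise monotone maps $p\mapsto\psmooth_v$. The paper simply packages these same steps into the general Theorems~\ref{thm:PRDS_generic} and~\ref{thm:PRDS_average} (with Lemma~\ref{lem:ust_concat} for the independent block) and then specializes, whereas you argue them inline; the content is the same. Your secondary worry about flat stretches of $F_i\circ G_i$ is handled in the paper by working directly with the stochastic-increasing relation $(X_1,\ldots,X_n)\ust Y_j$ and the Strassen coupling (Proposition~\ref{prop:coupling}), which avoids ever needing to invert $F_i\circ G_i$.
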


Theorem \ref{thm:combination_test} covers a broad class of descendant smoothing functions. For Stouffer smoothing, $G_v(x) = 1$, $H_{v, c}(x) = \Phi(x)$ and $H_{v, c}'(x) = \exp\{-x^2 / 2\} / \sqrt{2\pi}$, which is log-concave; for Fisher smoothing, $G_v(x) = 1$, $H_{v, c}(x) = \exp\{x / 2\}$ and $H_{v, c}'(x) = \exp\{x / 2\} / 2$, which is log-concave; for generalized mean smoothing \citep{vovk2020combining}, $G_v(x) = (x / |\mathcal{C}_v|)^{1/r}$, $H_{v, c}(x) = x^{1/r}$ and $H_{v, c}'(x) = x^{1/r - 1} / r$, which is log-concave if $0 < r \le 1$. For all these smoothing methods, Theorem \ref{thm:combination_test} covers their weighted versions with $H_{v, c}(x)$ replaced by $a_{v, c}H_{v, c}(x)$ for arbitrary $a_{v, c}\ge 0$, since the log-concavity of the derivative continues to hold.

Theorem \ref{thm:order_test} presents another class of smoothed $p$-values based on order statistics. It includes Tippett's method with $G_{v}(x) = |\mathcal{C}_v|x$ and $k_v = 1$ and R\"{u}ger's method with $G_{v}(x) = |\mathcal{C}_v|x / k$ and $k_v = k$. 

\begin{theorem}\label{thm:order_test}
  Under the same assumptions as in Theorem \ref{thm:combination_test}, the smoothed $p$-values satisfy \fullPRDS{} if
  \[f_v(p_{\mathcal{C}_v}) = G_v(p_{\mathcal{C}_v, (k_v)})\]
  where $p_{\mathcal{C}_v, (1)}\le p_{\mathcal{C}_v, (2)} \le \ldots \le p_{\mathcal{C}_v, (|\mathcal{C}_v|)}$ denote the order statistics of $p_{\mathcal{C}_v}$ and $k_v\in [1, |\mathcal{C}_v|] $ is an arbitrary integer.
\end{theorem}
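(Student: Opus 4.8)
The plan is to run the same stochastic-ordering argument used for Theorem~\ref{thm:combination_test}, with the ``conditional law given a weighted sum of the $p$-values'' step replaced by the more elementary ``conditional law given an order statistic'' step. Write $\mathbf q = p_{\bar{\mathcal S}}$ for the i.i.d.\ Uniform$[0,1]$ null $p$-values and $\mathbf r = p_{\mathcal S}$ for the nonnull ones, so $\mathbf q \perp \mathbf r$. Two structural facts drive everything. First, by the graph constraint every node in $\mathcal C_i$ is null whenever $i\in\bar{\mathcal S}$, so $\psmooth_i$ is a function of $\mathbf q$ alone. Second, by the probability-integral transform $\psmooth_v$ is a non-decreasing function -- in fact the Beta$(k_v,|\mathcal C_v|-k_v+1)$ c.d.f.\ -- of the order statistic $p_{\mathcal C_v,(k_v)}$, which is coordinatewise non-decreasing in $(\mathbf q,\mathbf r)$; hence every $\psmooth_v$ is a coordinatewise non-decreasing function of $(\mathbf q,\mathbf r)$.

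First I would reduce to the case in which only $\mathbf q$ is random. Fix $i\in\bar{\mathcal S}$. Since $\psmooth_i$ depends only on $\mathbf q$ and $\mathbf q\perp\mathbf r$, the regular conditional law of $\mathbf r$ given $\psmooth_i=t$ is its marginal, so $\mathrm{pr}(\psmooth\in D\mid\psmooth_i=t)=\int\mathrm{pr}(\psmooth\in D\mid\psmooth_i=t,\ \mathbf r=\rho)\,d\mathcal L_{\mathbf r}(\rho)$, and it suffices to show the integrand is non-decreasing in $t$ for a.e.\ $\rho$. With $\rho$ fixed, each $\psmooth_v$ becomes a fixed coordinatewise non-decreasing function $\phi_v$ of $\mathbf q$, with $\psmooth_i=\phi_i(\mathbf q)$ a non-decreasing function of $q_{A,(k_i)}$ where $A:=\mathcal C_i\subseteq\bar{\mathcal S}$; conditioning on $\psmooth_i=t$ is then equivalent to conditioning on the order statistic $q_{A,(k_i)}=s$ for the corresponding $s=s(t)$, which is non-decreasing in $t$.

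The core step is to show that the regular conditional law of $\mathbf q$ given $q_{A,(k_i)}=s$ is non-decreasing in $s$ in the multivariate stochastic order. From the joint density one reads off that this conditional law is the mixture: draw the allocation of the $|A|$ indices in $A$ into the $k_i-1$ ``below'' ones, the single ``tied'' one, and the $|A|-k_i$ ``above'' ones -- a distribution that, by exchangeability of the uniforms, does not depend on $s$ -- then set the below-coordinates i.i.d.\ Uniform$[0,s]$, the tied coordinate equal to $s$, the above-coordinates i.i.d.\ Uniform$[s,1]$, and the coordinates outside $A$ i.i.d.\ Uniform$[0,1]$. Each piece is stochastically non-decreasing in $s$ (Uniform$[0,s]$ and Uniform$[s,1]$ obviously are; constants and the outside block trivially are), independent products of stochastically increasing families are stochastically increasing, and a fixed mixture of stochastically increasing families is stochastically increasing; hence the conditional law is $\le_{\mathrm{st}}$-monotone in $s$. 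I expect the only real friction here to be bookkeeping: the conditioning event has probability zero, so one should either work with an explicit density or pass to a limit, and one should verify carefully that the ``which coordinates are below/above'' allocation really is $s$-free.

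Finally I would assemble the pieces. For a non-decreasing set $D$, write $\mathrm{pr}(\psmooth\in D\mid\psmooth_i=t)=\mathrm{pr}(\psmooth_{-i}\in D^{(t)}\mid\psmooth_i=t)$ with $D^{(t)}=\{y:(y,t)\in D\}$; since $D$ is non-decreasing, $D^{(t)}$ is a non-decreasing set and is itself non-decreasing in $t$. As $t$ increases, $s=s(t)$ increases, the conditional law of $\mathbf q$ -- hence of its coordinatewise non-decreasing image $\psmooth_{-i}=(\phi_v(\mathbf q))_{v\neq i}$ -- increases in $\le_{\mathrm{st}}$, and the target set $D^{(t)}$ grows; stringing these inequalities together shows $t\mapsto\mathrm{pr}(\psmooth\in D\mid\psmooth_i=t)$ is non-decreasing, i.e.\ $\psmooth$ is \fullPRDS{} at coordinate $i$. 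Since $i\in\bar{\mathcal S}$ was arbitrary, $\psmooth$ is \fullPRDS{} on $\bar{\mathcal S}$. This mirrors the proof of Theorem~\ref{thm:combination_test}, with the log-concavity of $H_{v,c}'$ there replaced by the elementary stochastic monotonicity of the uniform order-statistic kernel.
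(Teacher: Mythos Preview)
Your argument is correct and structurally parallel to the paper's, but you prove by hand the one fact the paper imports from the literature.

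The paper's route is: first establish a generic reduction (its Theorem~\ref{thm:PRDS_generic}) showing that if every $\psmooth_v$ is a coordinatewise non-decreasing function of the underlying $p$-vector and the conditional law of that vector given $\psmooth_i$ is stochastically increasing, then \PRDS{} at $i$ follows via a Strassen-type coupling. For Theorem~\ref{thm:order_test} the paper then simply cites a result of \cite{block1987probability} giving $(X_1,\ldots,X_n)\ust X_{(k)}$ for i.i.d.\ continuous $X_i$, and invokes a short concatenation lemma (Lemma~\ref{lem:ust_concat}) to adjoin the independent nonnull block.

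Your proof opens both black boxes. Your ``core step'' --- the explicit mixture representation of the conditional law given $q_{A,(k_i)}=s$ as a fixed-weight mixture over allocations of products of $\mathrm{Uniform}[0,s]$, $\delta_s$, and $\mathrm{Uniform}[s,1]$ pieces, together with its $s$-monotonicity --- is a direct, elementary proof of the Block--Savits--Shaked fact in the uniform case. Your integration over $\rho$ plays exactly the role of the paper's concatenation lemma, and your section-of-$D$ device in the final paragraph is the concrete version of the coupling argument inside Theorem~\ref{thm:PRDS_generic}. What you gain is a fully self-contained argument requiring no external stochastic-ordering results; what the paper gains is modularity --- the same Theorem~\ref{thm:PRDS_generic} serves both Theorems~\ref{thm:combination_test} and~\ref{thm:order_test}, with only the ``$\ust$ given the statistic'' ingredient swapped (Efron's log-concave result versus Block--Savits--Shaked) --- and brevity via citation.
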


Taken together, Theorems \ref{thm:combination_test} and \ref{thm:order_test} establish that \DAGGER{} will control the \fdr{} at the nominal level for most smoothing methods outlined in \cref{subsec:method:descendants}. % When combined with the results from Lemmas \ref{lem:smoothfwerworks} and \ref{lem:fdx}, the proposed descendant smoothing methods are also able to control the \fwer{} and \fdx{}. 

\subsection{Dependent null statistics with Gaussian copulas}
\label{subsec:testing:dependence}
So far we have assumed that $p_{\bar{\mathcal{S}}}$ are independent and (super-)uniform. % \iid{} $\mathrm{Uniform}[0,1]$
If this does not hold, the smoothed $\psmooth$ values are not guaranteed to be super-uniform.  This limits our ability to use these $\psmooth$-values for hypothesis testing. However, if anything is known about the dependency structure of the $p$-values then it may be possible to use this knowledge to create conservative bounds yielding super-uniform $\psmooth$-values. For instance, when the dependency structure between $p$-values is known, Fisher smoothing can be made valid by adjusting the critical value \citep{brown1975400, kost2002combining}.
However, when the correlation structure is unknown, Fisher smoothing is not recommended as it will likely lead to inflated false discovery rates. 

In this section we consider the case that $p_{\bar{\mathcal{S}}}$ carries a Gaussian copula with unknown correlation matrix $R$.  In other words, letting $\Phi^{-1}$ denote the quantile function of the standard normal and letting $Z_v = \Phi^{-1}(p_v)$, we will assume that $Z_{\bar{\mathcal{S}}} \sim \mathcal{N}(0,R)$.  This case arises naturally if the $p$-values come from correlated $z$-scores.  

When the copula is Gaussian, we can still control any of the three target error metrics by using a method we dub \emph{conservative Stouffer smoothing},
\[
\psmooth_v \gets 
    \begin{cases}
        1 & \mathrm{if } \sum_{w \in \mathcal{C}_v} \pi_{vw} Z_w\geq 0 \\
        \Phi\left(\sum_{w \in \mathcal{C}_v} \pi_{vw} Z_w\right) & \mathrm{otherwise}.
    \end{cases}
\]
where $\Phi$ is the cumulative distribution function of the standard normal and $\pi$ satisfies  $\pi_{vw}\geq 0,\sum_{w \in \mathcal{C}_v} \pi_{vw}=1$.

For \fwer{} and \fdx{}, it suffices to show that the smoothed $\psmooth$-values are marginally super-uniform.
\begin{restatable}[marginal validity for Gaussian copulas]{lemma}{mvnfwer}
\label{lem:mvnfwer}  Conservative Stouffer smoothing on marginally uniform $p$-values with any Gaussian copula yields super-uniform smoothed $\psmooth$-values, i.e. $pr(\psmooth_v \leq \alpha) \leq \alpha$ for all $v\in\bar{\mathcal{S}}$.
\end{restatable}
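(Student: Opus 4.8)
The plan is to reduce the claim to a one-dimensional Gaussian tail bound after noticing that, for a null node, conservative Stouffer smoothing averages only over other null coordinates. Fix $v\in\nulls$. First I would use the logical constraint to show $\mathcal{C}_v\subseteq\nulls$: every $c\in\mathcal{C}_v$ equals $v$ or is a descendant of $v$, so $v$ is an ancestor of $c$; if $H_c$ were false then $H_v$ would have to be false, contradicting $v\in\nulls$. Hence $Z_{\mathcal{C}_v}=\Phi^{-1}(p_{\mathcal{C}_v})$ is a subvector of $Z_{\nulls}\sim\mathcal{N}(0,R)$, and the weighted combination $S_v\triangleq\sum_{w\in\mathcal{C}_v}\pi_{vw}Z_w$ is univariate Gaussian with mean $0$ and variance $\sigma_v^2=\sum_{w,w'\in\mathcal{C}_v}\pi_{vw}\pi_{vw'}R_{ww'}$.

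Second, I would bound $\sigma_v^2\le 1$. Since $R$ has unit diagonal, all entries at most $1$, and $\pi_{v\cdot}$ is a probability vector with nonnegative entries, $\sigma_v^2\le\sum_{w,w'}\pi_{vw}\pi_{vw'}=\big(\sum_w\pi_{vw}\big)^2=1$ (equivalently, Minkowski's inequality gives $\|S_v\|_2\le\sum_w\pi_{vw}\|Z_w\|_2=1$). If $\sigma_v=0$ then $S_v=0$ almost surely, so $\psmooth_v\equiv 1$ and super-uniformity is immediate; so assume $\sigma_v\in(0,1]$, in which case $S_v/\sigma_v\sim\mathcal{N}(0,1)$.

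Third, I would bound $\mathrm{pr}(\psmooth_v\le\alpha)$ by splitting at $\alpha=1/2$. From the definition, $\psmooth_v\ge\Phi(S_v)$ pointwise (with equality when $S_v<0$, and $\psmooth_v=1\ge\Phi(S_v)$ when $S_v\ge 0$). For $\alpha\le 1/2$ we have $\Phi^{-1}(\alpha)\le 0$, so
\[
\mathrm{pr}(\psmooth_v\le\alpha)\le\mathrm{pr}(S_v\le\Phi^{-1}(\alpha))=\Phi\!\left(\Phi^{-1}(\alpha)/\sigma_v\right)\le\Phi\!\left(\Phi^{-1}(\alpha)\right)=\alpha,
\]
using that dividing a nonpositive number by $\sigma_v\le 1$ only makes it smaller. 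For $\alpha\in(1/2,1)$ this crude bound no longer suffices, and one must instead exploit the clipping to $1$ in the definition: the event $\{\psmooth_v\le\alpha\}\subseteq\{\psmooth_v<1\}=\{S_v<0\}$ has probability $1/2\le\alpha$. The case $\alpha=1$ is trivial. Combining the cases yields $\mathrm{pr}(\psmooth_v\le\alpha)\le\alpha$ for all $\alpha\in[0,1]$ and all $v\in\nulls$.

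\textbf{Main obstacle.} There is no deep difficulty; the two points that need care are (i) propagating the logical nesting constraint all the way down $\mathcal{C}_v$, which is exactly what guarantees $S_v$ is centered Gaussian with no contamination from alternative coordinates, and (ii) the regime $\alpha>1/2$, where the bound $\psmooth_v\ge\Phi(S_v)$ fails to give super-uniformity and the truncation at $1$ in conservative Stouffer smoothing is essential.
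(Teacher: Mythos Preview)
Your proposal is correct and follows essentially the same argument as the paper: show the convex combination $S_v$ is centered Gaussian with variance at most $1$, then split into the cases $\alpha\le 1/2$, $\alpha\in(1/2,1)$, and $\alpha=1$, using the truncation at $1$ for the middle case. You are in fact slightly more careful than the paper in spelling out why $\mathcal{C}_v\subseteq\bar{\mathcal{S}}$ and in handling the degenerate case $\sigma_v=0$.
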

% Control of the \fdx{} follows directly from \fwer{} control and \cref{lem:fdx}.
If we know that the nulls are all non-negatively correlated, 
% we can control the \fdr{}.
we can prove the following result, implying that \DAGGER{} can control the \fdr{}.
\begin{restatable}[\PRDS{} for conservative Stouffer smoothing]{lemma}{mvnfdr}
\label{lem:mvnfdr}
Let $R$ be a correlation matrix with no negative entries.  Conservative Stouffer smoothing on marginally uniform $p$-values with any Gaussian copula of correlation $R$ yields smoothed $\psmooth$-values which are \fullPRDSadj{} on $\bar{\mathcal{S}}$.
\end{restatable}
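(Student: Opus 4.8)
\section*{Proof proposal for Lemma \ref{lem:mvnfdr}}

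The plan is to establish PRDS by exhibiting, for each null index $i\in\bar{\mathcal{S}}$, a \emph{monotone coupling} of the conditional laws $(\psmooth\mid\psmooth_i=t)$: a single probability space carrying vectors $\psmooth^{(t)}$ with $\psmooth^{(t)}\stackrel{d}{=}(\psmooth\mid\psmooth_i=t)$ and $t\mapsto\psmooth^{(t)}$ coordinatewise non-decreasing. Given this, PRDS on $\bar{\mathcal{S}}$ is immediate: for a non-decreasing set $D$ and $t\le t'$, $\psmooth^{(t)}\in D$ forces $\psmooth^{(t')}\in D$, so $t\mapsto\P(\psmooth\in D\mid\psmooth_i=t)$ admits the non-decreasing version $t\mapsto\P(\psmooth^{(t)}\in D)$. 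Two structural facts drive the construction. First, since $i$ is null, the logical constraint (contrapositive) makes every descendant of $i$ null, so $\mathcal{C}_i\subseteq\bar{\mathcal{S}}$ and $S_i:=\sum_{w\in\mathcal{C}_i}\pi_{iw}Z_w$ is a non-negatively weighted linear combination of the entries of $Z_{\bar{\mathcal{S}}}\sim\mathcal{N}(0,R)$, and likewise for $S_w$ at every null $w$. Second, writing $\psmooth_v=g_v(S_v)$ with $g_v$ the non-decreasing map $s\mapsto\Phi(s)\cdot\mathbbm{1}\{s<0\}+\mathbbm{1}\{s\ge0\}$, and recalling $\pi_{vw}\ge0$, each null coordinate $\psmooth_v$ ($v\in\bar{\mathcal{S}}$) is a coordinatewise non-decreasing function of $Z_{\bar{\mathcal{S}}}$, and each non-null coordinate $\psmooth_v$ ($v\in\mathcal{S}$) is coordinatewise non-decreasing in $(Z_{\bar{\mathcal{S}}},Z_{\mathcal{S}})$ jointly, where $Z_{\mathcal{S}}$ is independent of $Z_{\bar{\mathcal{S}}}$.

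The coupling comes from Gaussian conditioning on the linear functional $S_i$. Assume $\sigma_i^2:=\mathrm{Var}(S_i)>0$ (if $\sigma_i^2=0$ then $S_i\equiv0$, $\psmooth_i\equiv1$, and the conditioning is vacuous). Write $Z_{\bar{\mathcal{S}}}=S_i\,b+\eps$ with $b:=\mathrm{Cov}(Z_{\bar{\mathcal{S}}},S_i)/\sigma_i^2$ and $\eps:=Z_{\bar{\mathcal{S}}}-S_i b$; then $\eps$ is Gaussian and independent of $S_i$, one checks $\langle\pi_i,b\rangle=1$ and $\langle\pi_i,\eps\rangle=0$, and crucially $b\ge0$ entrywise because every entry of $R$ and every $\pi_{iw}$ is non-negative. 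On a common space carrying $\eps$, the independent block $Z_{\mathcal{S}}$, and an independent $\tilde S$ distributed as $S_i$ conditioned on $\{S_i\ge0\}$, define $\psmooth^{(t)}$ for $t\in(0,\tfrac12)$ by feeding $(\Phi^{-1}(t)\,b+\eps,\,Z_{\mathcal{S}})$ through the coordinate maps above, and $\psmooth^{(1)}$ by feeding $(\tilde S\,b+\eps,\,Z_{\mathcal{S}})$ through them; these are the only relevant values, since $\psmooth_i\in(0,\tfrac12)\cup\{1\}$ almost surely. By construction $\langle\pi_i,\Phi^{-1}(t)b+\eps\rangle=\Phi^{-1}(t)$, so $\psmooth_i^{(t)}=t$ for $t<\tfrac12$, while $\tilde S\ge0$ gives $\psmooth_i^{(1)}=1$; and since $\Phi^{-1}(t)b+\eps\stackrel{d}{=}(Z_{\bar{\mathcal{S}}}\mid S_i=\Phi^{-1}(t))$ and $\tilde S b+\eps\stackrel{d}{=}(Z_{\bar{\mathcal{S}}}\mid S_i\ge0)$, each $\psmooth^{(t)}$ has the claimed conditional law. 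Coordinatewise monotonicity holds throughout: on $(0,\tfrac12)$ it follows from $b\ge0$, monotonicity of $\Phi^{-1}$, and monotonicity of the coordinate maps; and since $\tilde S(\omega)\ge0>\Phi^{-1}(t)$ for every $t<\tfrac12$, we get $\Phi^{-1}(t)b+\eps(\omega)\preceq\tilde S(\omega)b+\eps(\omega)$ pointwise and hence $\psmooth^{(t)}\preceq\psmooth^{(1)}$.

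The main obstacle is precisely this seam: because conservative Stouffer smoothing caps $\psmooth_i$ at $1$, the atom $\{\psmooth_i=1\}=\{S_i\ge0\}$ carries positive probability and must be coupled to the continuous branch $\{S_i=s\}$, $s<0$, while preserving coordinatewise monotonicity — the auxiliary variable $\tilde S$ is introduced exactly to bridge it. A secondary point, handled by the independence of $Z_{\mathcal{S}}$ from $Z_{\bar{\mathcal{S}}}$, is that the non-null coordinates $\psmooth_v$, $v\in\mathcal{S}$, stay monotone along the coupling because $Z_{\mathcal{S}}$ is held fixed while only the (non-negatively loaded) null coordinates shift upward. Apart from the elementary Gaussian identities for $b$ and $\eps$, the only place the hypothesis that $R$ has no negative entries enters is the sign statement $b\ge0$, which is what makes the conditional shift coordinatewise non-decreasing; combined with \cref{lem:mvnfwer} this yields the \fdr{} guarantee for \DAGGER{}.
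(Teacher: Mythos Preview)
Your argument is correct and takes a genuinely different route from the paper. The paper's proof is two lines: it observes that the null linear combinations $Y_v=\sum_{w\in\mathcal{C}_v}\pi_{vw}Z_w$ for $v\in\bar{\mathcal{S}}$ are jointly Gaussian with nonnegative covariances (since $R\ge0$ entrywise and $\pi\ge0$), invokes the classical fact from \citet{benjamini:yekutieli:2001:fdr-dependence} that such a Gaussian vector is \PRDS{}, and then notes that $\psmooth_v$ is a coordinatewise nondecreasing function of $Y_v$, so \PRDS{} is inherited. You instead rederive the Gaussian \PRDS{} statement from scratch via an explicit monotone coupling built from the regression decomposition $Z_{\bar{\mathcal{S}}}=S_i\,b+\eps$, with $b\ge0$ the only place nonnegativity of $R$ enters. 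What this buys you is self-containment and a cleaner treatment of two points the paper's compressed proof leaves implicit: the atom $\{\psmooth_i=1\}=\{S_i\ge0\}$ created by the conservative cap (your auxiliary $\tilde S$ bridges the seam, whereas the paper relies tacitly on the fact that \PRDS{} survives non-strictly-monotone coordinatewise maps, which in turn requires that conditioning on $Y_i$ in an interval inherits monotonicity from pointwise conditioning), and the role of the nonnull block $Z_{\mathcal{S}}$, which you freeze along the coupling under the independence assumption while the paper passes over it. The paper's approach is shorter and modular, leaning on a citable black box; yours is longer but exposes exactly where each hypothesis is used.
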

Appendix \ref{subapp:depexperiments} shows several examples where this smoothing method yields improved power.
% It follows that \DAGGER{} can be applied to these conservatively smoothed $\psmooth$-values.
%%% Local Variables:
%%% mode: latex
%%% TeX-master: t
%%% End:

% !TEX root = biometrika/main.tex
\section{Results}
\label{sec:results}
\subsection{Simulations}
% !TEX root = biometrika/main.tex
To benchmark the power gains for smoothing, we run a set of simulations under different graph structures, alternative hypotheses, and target error metrics. In each case, we use Fisher smoothing on descendants. We consider the following \DAG{} structures:
\begin{itemize}
\item Deep tree. A tree graph with depth $8$ and branching factor $2$.
\item Wide tree. A tree graph with depth $3$ and branching factor $20$.
\item Bipartite graph. A two-layer graph with $100$ roots and $100$ leaves. Each root is randomly connected to $20$ leaves.
\item Hourglass graph. A three-layer graph with $30$ roots, $10$ middle nodes, and $30$ leaves. Each (root, middle) and (middle, leaf) edge is added with probability $0.2$. The graph is then post-processed to ensure each node is connected to at least one node, with middle nodes having at least one incoming and one outgoing node.
\end{itemize}
We generate one-sided $p$-values from $z$-scores with the null $z$-scores drawn from a standard normal. For each structure, we consider two scenarios:
\begin{itemize}
    \item Global alternative. The alternative distribution at each nonnull node is $\mathcal{N}(2, 1)$. \DAG{}s are populated starting at the leaves and null nodes are flipped to nonnull with probability $0.5$.
    \item Incremental alternative. The alternative distribution at each nonnull node is $\mathcal{N}(1+0.3\times (D-d), 1)$, where $d$ is the depth of the node and $D$ is the maximum depth of the \DAG{}. The graph is populated starting at the leaves with nonnull probability $0.5$ and internal nodes are intersection hypotheses that are null if and only if all their child nodes are null.
\end{itemize}
For each simulation, we run 100 independent trials and report empirical estimates of power, \fwer{}, \fdx{}, and \fdr{} at
\[
\alpha=(0.01, 0.02, 0.03, 0.04, 0.05, 0.08, 0.1, 0.15, 0.2, 0.25).
\]
For the \fdx{}, we fix $\gamma=0.1$ for all experiments. We compare performance with and without Fisher smoothing using the method of \citet{meijer:goeman:2015:dag-fwer} for \fwer{} control, \citet{meijer:goeman:2015:dag-fwer} with Algorithm~\ref{alg:fdx} for \fdx{} control, and \citet{ramdas:etal:2019:dagger} for \fdr{} control. Both structured testing methods are the current state of the art for testing on \DAG{}s, with both showing the highest power to-date relative to other methods targeting the same error rate. We also compare to the structureless method of \citet{benjamini:hochberg:1995:bh}, though this method does not preserve nesting structure. However, performance relative to this baseline illustrates how smoothing turns the graph structure into an advantage rather than just a constraint.

\begin{figure}[t!]
\includegraphics[width=\linewidth]{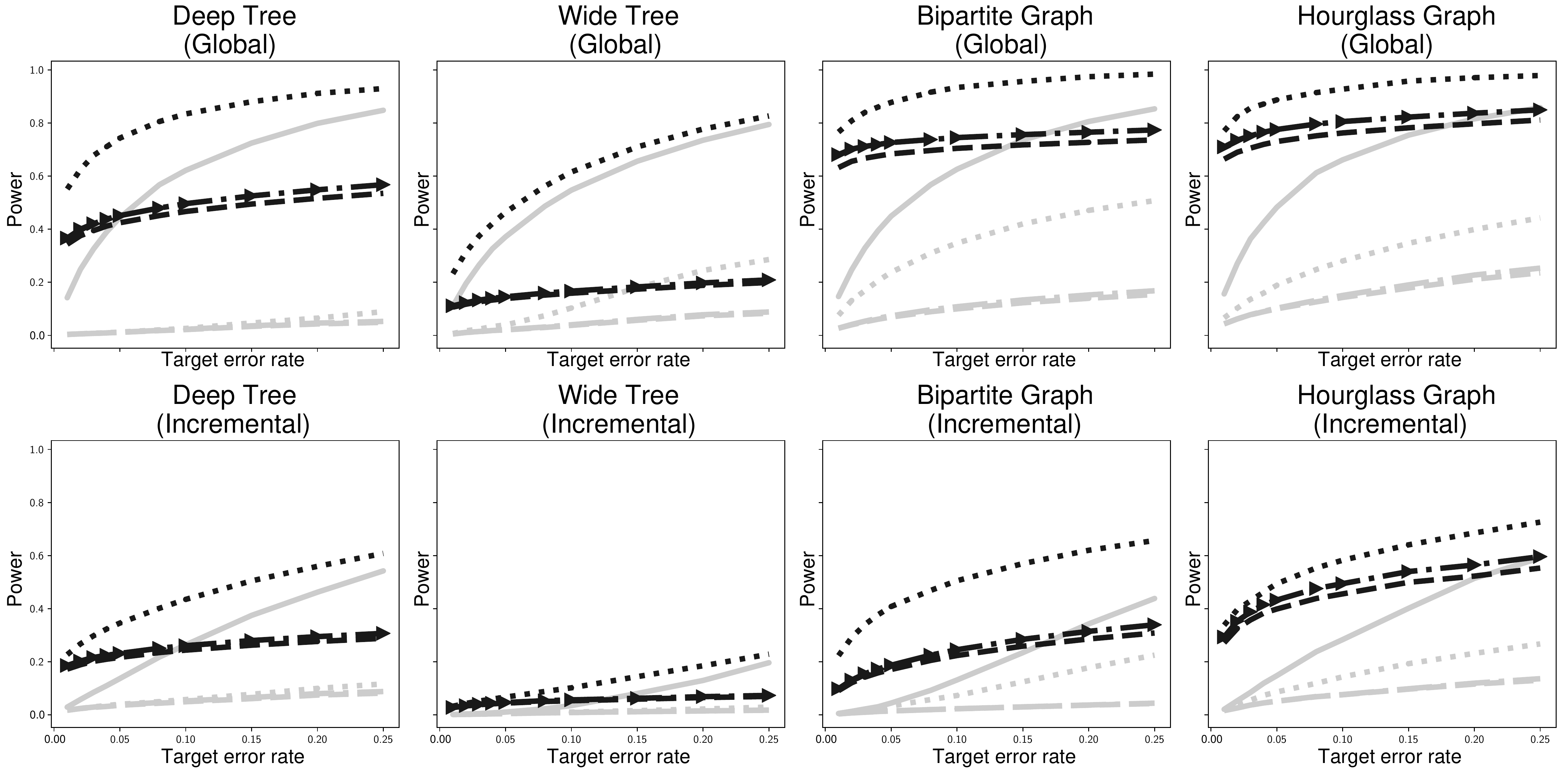}
\caption{\label{fig:simulation-power} Empirical power in each simulation as a function of target error rate. Gray lines are unsmoothed results, black lines are smoothed results; dashed lines use \citet{meijer:goeman:2015:dag-fwer}, dashed lines with arrows use \citet{meijer:goeman:2015:dag-fwer} with Algorithm~\ref{alg:fdx}, and dotted lines use \citet{ramdas:etal:2019:dagger}; the solid gray line is the structureless method of \citet{benjamini:hochberg:1995:bh}.}
\end{figure}

\Cref{fig:simulation-power} presents empirical estimates of power for each simulation. In each scenario, Fisher smoothing boosts the power of all three methods.  Moreover, in all simulations \DAGGER{} actually performs as well or better than \BH{}. This is particularly promising since \citet{ramdas:etal:2019:dagger} found that the \BH{} method almost always had higher power and only in very limited scenarios would the structured method outperform. This is completely reversed in the smoothed case, with the \BH{} method generally having lower power due to being unaware of the structure of the method.  

\begin{figure}[t!]
\includegraphics[width=\linewidth]{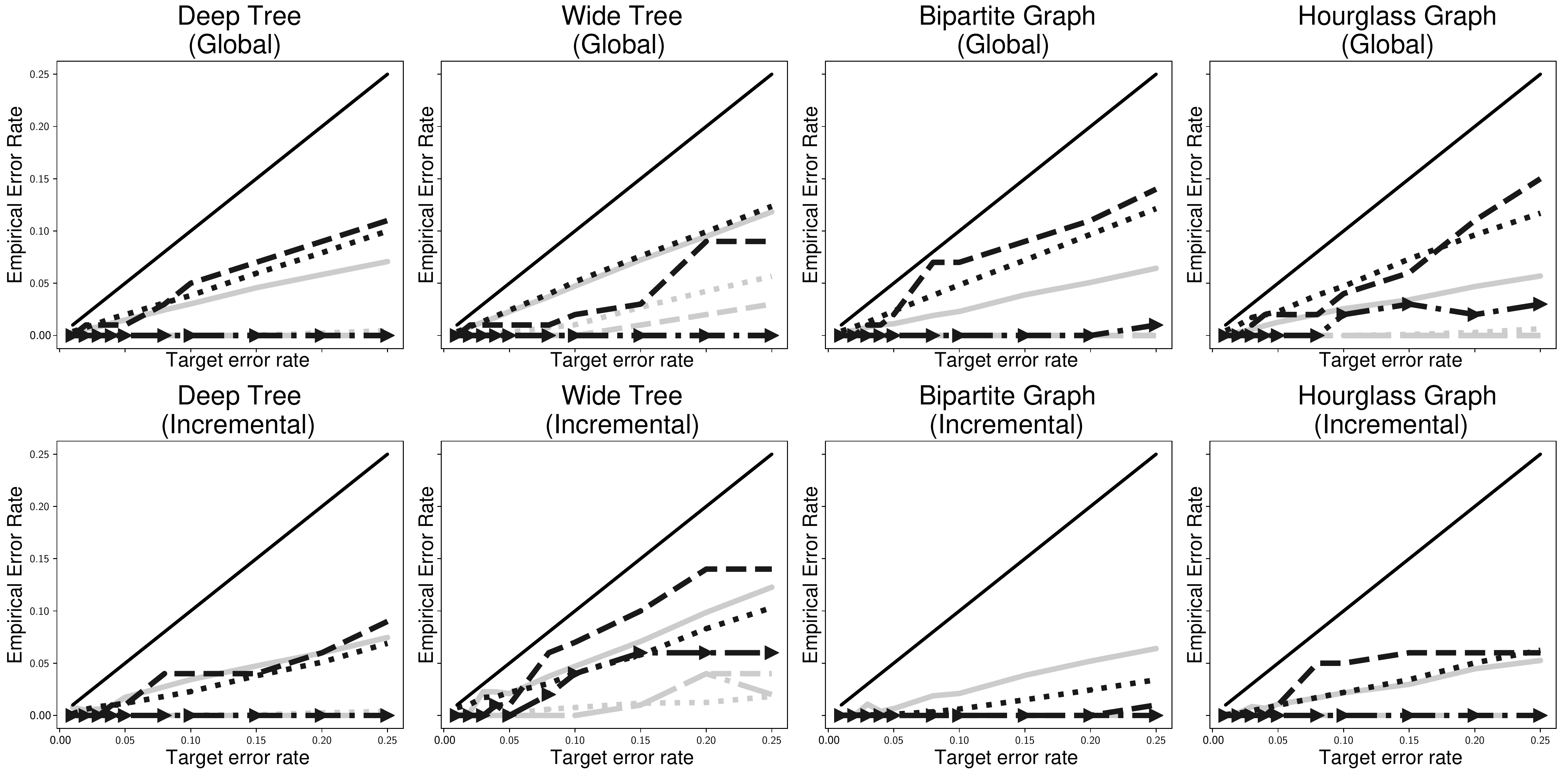}
\caption{\label{fig:simulation-error} Empirical target error rates in each simulation. Lines match those in \cref{fig:simulation-power}; the solid black line is the $(0,1)$ line (maximum allowable error). To facilitate comparison, each method is plotted using its specific target error metric: dashed lines target \fwer{}, dashed lines with arrows target \fdx{}, and solid and dotted lines target \fdr{}.}
\end{figure}

\Cref{fig:simulation-error} confirms that indeed all methods conserve their target error rates empirically. In general, smoothing makes each method less conservative but not to the point of violating the target rate. This is precisely the desired outcome: given a budget for errors, one would prefer to make full use of the budget in order to maximize the number of discoveries.

Fisher smoothing is not guaranteed to increase the power of any of these algorithms, though for any given scenario there is always some form of smoothing which will increase power.  Smoothing methods can help most when the smoothed values for nonnull hypotheses are most heavily influenced by $p$-values from other nonnull hypothesis.  Appendix \ref{subapp:badexperiments} contains intuition and numerical experiments which may help users select smoothing functions which are appropriate for their data.

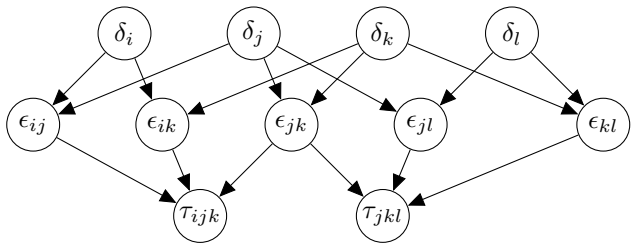
\begin{figure}[t]
\centering
\begin{tikzpicture}

  % Define nodes
  \node[latent]                               (di) {$\delta_i$};
  \node[latent, right=of di]                               (dj) {$\delta_j$};
  \node[latent, right=of dj]                               (dk) {$\delta_k$};
  \node[latent, right=of dk]                               (dl) {$\delta_l$};
  \node[latent, below left=of di]                (eij) {$\epsilon_{ij}$};
  \node[latent, right=of eij]                (eik) {$\epsilon_{ik}$};
  \node[latent, right=of eik]                (ejk) {$\epsilon_{jk}$};
  \node[latent, right=of ejk]                (ejl) {$\epsilon_{jl}$};
  \node[latent, below right=of dl]                (ekl) {$\epsilon_{kl}$};
  \node[latent, below left=of ejk]                (tijk) {$\tau_{ijk}$};
  \node[latent, below right=of ejk]                (tjkl) {$\tau_{jkl}$};

  % Connect the nodes
  \edge {di} {eij} ; %
  \edge {di} {eik} ; %

  \edge {dj} {eij} ; %
  \edge {dj} {ejk} ; %
  \edge {dj} {ejl} ; %
  
  \edge {dk} {eik} ; %
  \edge {dk} {ejk} ; %
  \edge {dk} {ekl} ; %

  \edge {dl} {ejl} ; %
  \edge {dl} {ekl} ; %

  \edge {eij} {tijk} ; %
  \edge {eik} {tijk} ; %
  \edge {ejk} {tijk} ; %

  \edge {ejk} {tjkl} ; %
  \edge {ejl} {tjkl} ; %
  \edge {ekl} {tjkl} ; %

\end{tikzpicture}
\caption{\label{fig:genetic-interaction-dag} Example directed acyclic graph for the genetic interaction study. Individual gene knockouts $\delta$s are the top of the graph, pair knockouts $\tau$ are in the middle, and triplet knockouts $\epsilon$ are the leaves. Each node corresponds to an experiment conducted independently and has an independent $p$-value. If any set of genes potentially contributes to synthetic lethality, the null hypothesis at that node is rejected; all subsets must implicitly be rejected as well.}
\end{figure}

\begin{figure}[t!]
\centering 
\begin{subfigure}{0.48\textwidth}\includegraphics[width=\textwidth]{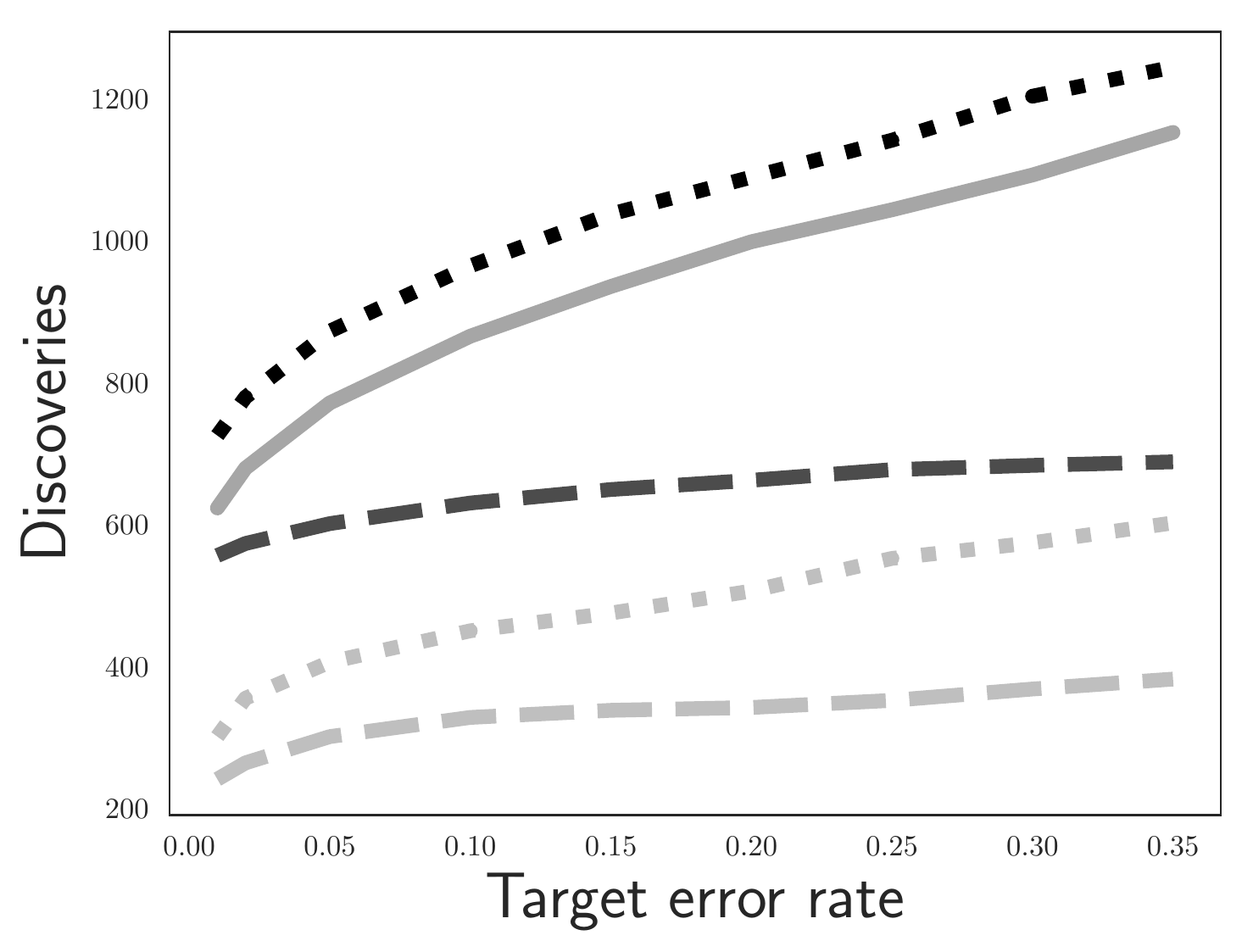}\end{subfigure}
\begin{subfigure}{0.48\textwidth}\includegraphics[width=\textwidth]{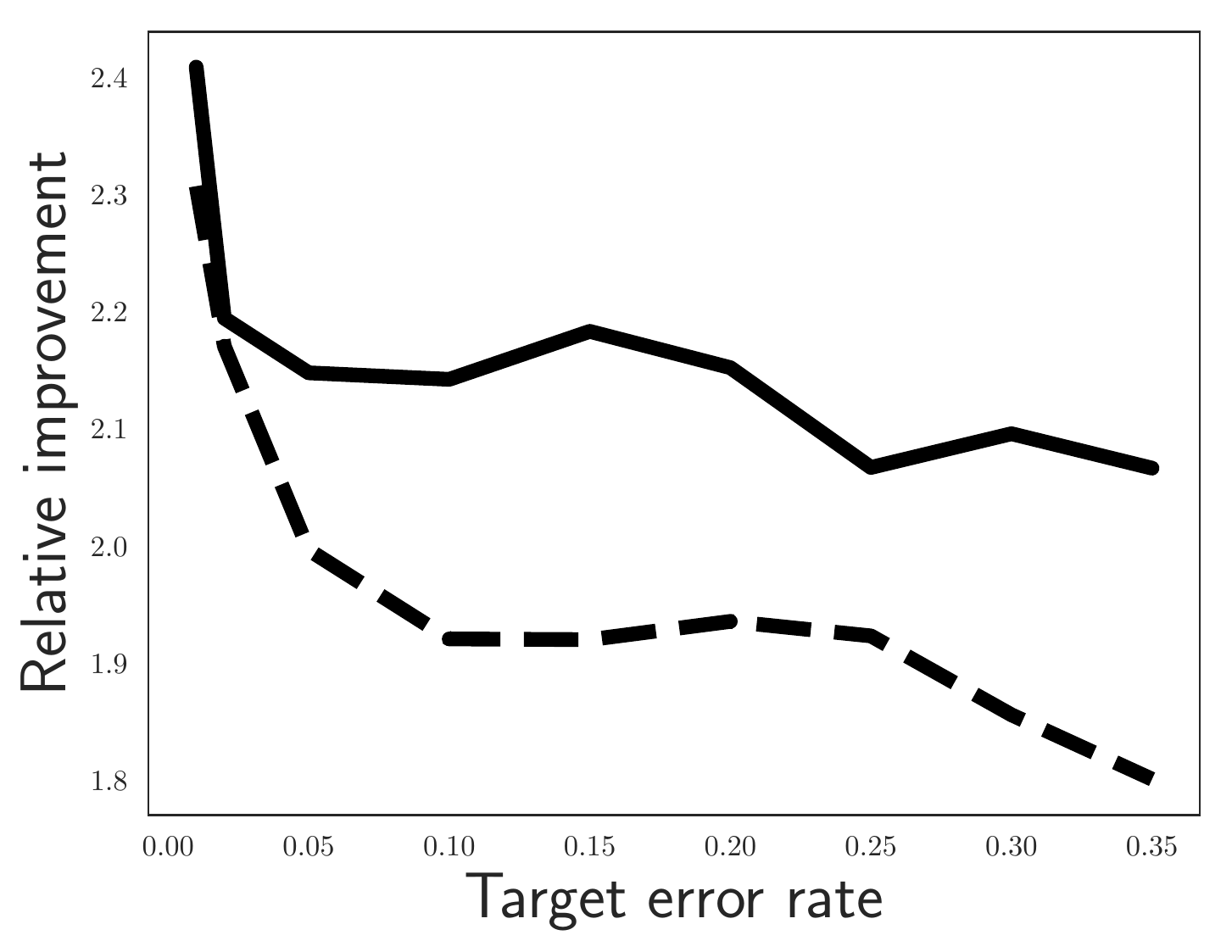}\end{subfigure}
\caption{\label{fig:genes} Performance comparison of raw $p$-values versus smoothed $\psmooth$-values on a biological dataset. Left: total discoveries reported by each method at varying error rates. Solid gray line: \BH{}; dotted black line: Fisher smoothing with \DAGGER{}, dashed dark gray line: Fisher smoothing with the method of \citet{meijer:goeman:2015:dag-fwer}; light gray lines at bottom are the same two methods without smoothing. Right: relative gain of using smoothed $\psmooth$-values over raw $p$-values. Solid black line: relative improvement for \DAGGER{}; dashed black line: relative improvement for the method of \citet{meijer:goeman:2015:dag-fwer}.}
\end{figure}

\subsection{Application to Genetic Interaction Maps}
\label{subsec:results:genes}
We demonstrate the gains of smoothed testing on a real dataset of genetic interactions in yeast cells \citep{kuzmin:etal:2018:yeast-trigenic-interactions}. The data measure the effects of treatments on cell population ``fitness''-- the population size after a fixed incubation window, relative to the initial population size before treatment. The treatment in each experiment is a gene knockout screen that disables a specified set of genes in the population; experiments in the dataset include gene knockout sets of size $1$, $2$, and $3$. For experiments with more than a single gene knocked out, the goal is to determine whether there is any added interaction between the genes that affects fitness. The outcome of interest is the fitness beyond what is expected from independent effects,
\begin{align*}
\label{eqn:genetic_fitness}
\epsilon_{ij} &= \delta_{ij} - (\delta_i \delta_j) \\
\tau_{ijk} &= \delta_{ijk} - (\delta_i \delta_j \delta_k) - \epsilon_{ij} \delta_k - \epsilon_{ik} \delta_j - \epsilon_{jk} \delta_i \, ,
\end{align*}
where $\delta_i$ is the fitness of the population when knocking out the $i^{th}$ gene. The pair score $\epsilon_{ij}$ and triplet score $\tau_{ijk}$ capture the added effect on fitness of knocking out the entire set. In words, $\epsilon_{ij}$ and $\tau_{ijk}$ model the interaction between the genes in the target set above what would be expected by chance if there were no unique interaction between all genes in the set. A set of genes with a negative interaction score is known as a \textit{synthetic lethal} set. See Figure 1 in \citet{kuzmin:etal:2018:yeast-trigenic-interactions} for detailed experimental procedures and details on the definition of $\delta$, $\epsilon$, and $\tau$.

The scientific goal in the yeast dataset is to determine which gene (sub)sets have the potential to contribute to synthetic lethality if disabled. Individual genes may not always reflect this. DNA damage repair mechanisms and other cellular machinery may compensate for an individual knocked out gene to lead to minuscule effects on fitness. If that machinery is also disabled via a second knockout, the signal may become clearer. If a knockout of a pair $(i, j)$ leads to an interaction that produces a synthetic lethal result, the implication is that genes $i$ and $j$ both have the potential to contribute to synthetic lethality.

\cref{fig:genetic-interaction-dag} shows the implied directed acyclic graph encoding the null hypotheses in the yeast dataset. The graph has three levels. Individual gene knockouts form the root nodes, pair knockouts form the middle, and triplet knockouts form the leaves. If any of the leaf nodes is rejected, it implies every constituent pair has the potential to contribute to synthetic lethality.

The yeast dataset has $338$ single-gene experiments, $31092$ pair experiments, and $5451$ triplet experiments. This leads to a graph with $36881$ nodes and $78519$ edges. Each experiment is conducted independently across four replicates. A $t$-test is run for each target outcome variable ($\delta_i$, $\epsilon_{ij}$, or $\tau_{ijk}$) to compare the mean population size to the expected population size.

We use Fisher smoothing and perform selection with \fwer{} control via \citet{meijer:goeman:2015:dag-fwer} and \fdr{} control via \citet{ramdas:etal:2019:dagger}, each at the target error levels,
\[\alpha=(0.01, 0.02, 0.05, 0.1, 0.15, 0.2, 0.25, 0.3, 0.35).
\]
\Cref{fig:genes} shows the results and comparison to the same methods on the original test statistics. As in the simulations, the power gains are substantial: between $1.8$x and $2.4$x more discoveries after smoothing. Further, the smoothed \DAGGER{} power is slightly higher than \BH{}; without smoothing, \DAGGER{} would be substantially lower power than \BH{}. Smoothing therefore has the important effect of recovering or even surpassing the power of \BH{} while preserving the logical nesting structure.

% !TEX root = biometrika/main.tex
\section{Discussion}
\label{sec:discussion}
\subsection{Reshaping for false discovery rate control under dependence}
\label{subsec:discussion:reshaping}
When targeting control of the \fdr{}, we relied on \DAGGER{} for selection after smoothing. There are actually two variants of this method, each extending structureless testing methods to the \DAG{} testing scenario. We focused on the version which extends \BH{} and requires \PRDS{}. Another version extends the \citet{benjamini:yekutieli:2001:fdr-dependence} procedure to \DAG{}s by reshaping the node-level test statistics.

As in the structureless procedure, reshaping controls the \fdr{} regardless of any dependency structure among the statistics, so long as they are marginally (super-)uniform. This reshaping procedure can be applied directly to the smoothed $\psmooth$-values to control the \fdr{}. However, the reshaping procedure raises the bar for rejection, often leading to low power, and will always lead to strictly lower power than the \BH{} extension. Nevertheless, the reshaping variant could be used to control the \fdr{} when using smoothing functions for which no \PRDS{} guarantee is available. In these cases, \fdr{} control would be feasible if the smoothing functions led to $\psmooth$-values that were marginally super-uniform; proving this for a given smoothing function may require some knowledge of the dependency structure.

\subsection{More powerful procedures for false exceedance control}
In section \ref{subsec:testing:fdx}, we proposed a greedy algorithm, outlined in Appendix \ref{subapp:fdx}, by combining the method of \citet{meijer:goeman:2015:dag-fwer} and the generic procedure of \cite{genovese:wasserman:2006:fdx}. Since any data-dependent topological ordering suffices, it is ideal to choose one that yields the highest power. Intuitively, we could iteratively add the most ``promising'' hypothesis, like the one with smallest p-values, which does not break the logical constraint. More generally, we could move beyond greedy algorithms by defining a loss function on each subset in $\mathcal{V}\setminus \rejected_0$ with cardinality $|\rejected_0| \gamma / (1 - \gamma)$ and finding one that minimizes the loss via a combinatorial optimization algorithm. For instance, the loss function can be defined as the sum of p-values. 

% We proposed Algorithm~\ref{alg:fdx} as an extension of the method of \citet{meijer:goeman:2015:dag-fwer}. The extension is straightforward and more powerful procedures may exist that still control the \fdx{}. This is suggested by the experiments in \cref{sec:results}, where the empirical error rate for Algorithm~\ref{alg:fdx} is substantially less than the target error rate in most cases. The exception is the wide tree case, where the empirical error rate is tighter. Investigating a more powerful procedure for nested testing with \fdx{} control is left for future work.

\subsection{Ancestor smoothing functions}
\label{subsec:discussion:ancestors}
We described a general framework for smoothing test statistics on \DAG{}s by sharing statistical strength between related nodes. Our current work focused on descendant smoothing functions, which ignore the information contained in ancestor nodes. An alternative class of smoothing functions uses ancestor nodes to smooth the node statistics. These ancestor smoothing functions use prior knowledge about the experiment to adapt their test statistic based on the data in ancestral nodes. Ancestor smoothing functions construct $\psmooth_v$ by fixing all ancestor $p$-values of $p_v$ and fitting a model to the ancestor values. The model requires prior knowledge of the alternative hypothesis, such as knowing that the alternative signal attenuates with the depth of the graph. We expect this to be the case for many real-world scenarios where shallower nodes represent more complex mechanisms or stronger interventions.

Ancestor functions have the appeal of potentially incorporating prior knowledge to gain higher power, but come with substantial trade-offs. First, they typically do not have a closed form null distribution. This makes them computationally expensive, as they require many Monte Carlo simulations in which for every trial the adaptive procedure must be re-run. They are also much more difficult to analyze theoretically, since the the joint distribution of ancestor-smoothed $p$-values involves contributions from variables associated with false hypotheses.   Pragmatically, we have not found any practical examples where the prior knowledge is so strong that it leads to meaningful increases in performance over descendant smoothing. We leave investigation of ancestor smoothing and hybrid ancestor-descendant smoothing functions to future work.

\section*{Acknowledgments}
The authors thank Lizhen Lin and Kyoungjae Lee for helpful conversations with the early versions of this work. We also thank Thomas Norman for pointing out the genetic interaction maps application area.

\end{spacing}

% References

\begin{small}
\bibliographystyle{abbrvnat}
\bibliography{main}
\end{small}

\singlespacing
\appendix
% !TEX root = arxiv/main.tex

\section{A general theory of \PRDS{}}\label{sec:PRDS}

We establish a general theory of \PRDS{} that extends the results of \cite{benjamini:yekutieli:2001:fdr-dependence} to a much broader class of statistics. Section \ref{subapp:generic} presents a generic result that connects the concepts of stochastic ordering with \PRDS{}. In Section \ref{subapp:average} and \ref{subapp:order}, we prove results which imply Theorem \ref{thm:combination_test} and Theorem \ref{thm:order_test}, respectively. Finally, the technical lemmas are presented in Section \ref{subapp:lemmas}. 

\subsection{A Generic Result}\label{subapp:generic}
We start by two definitions related to stochastic orderings. 

\begin{definition}\label{def:ordering} 
  Given two vectors $x, y\in \R^{n}$, $x\preceq y$ iff $x_{i}\le y_{i}$ for all $i\in \{1, \ldots, n\}$. A function $f: \R^{d}\mapsto \R$ is non-decreasing iff $f(x)\le f(y)$ for any $x\preceq y$. Further, given two probability distributions $P_{1}, P_{2}$ on $\R^{d}$, $P_{1}\preceq P_{2}$ iff for any nondecreasing function $f$,
\[\int_{\R^{d}}fdP_{1}\le \int_{\R^{d}}fdP_{2}.\]
\end{definition}

\begin{definition}\label{def:ust}
  A random vector $X$ is said to be stochastically increasing  in the random variable $Y$, denoted by $X\ust Y$, if the regular conditional probability $\P(X \in \cdot \mid Y = y)$ exists (e.g. the underlying measurable space is a Polish space), and $\E [g(X) \mid Y = y]$ is bounded nondecreasing in $y$ for every nondecreasing function $g$. 
\end{definition}

Indeed, $(X_{1}, \ldots, X_{n})$ satisfy \PRDS{} on a subset $T\subset \{1, \ldots, n\}$ iff
\[(X_{1}, \ldots, X_{n})\ust X_{i}, \quad \forall i\in T.\]
The ``if'' part is straightforward because the function $x\mapsto I(x\in \mathcal{C})$ for any non-decreasing set is non-decreasing. The ``only if'' part can be proved by approximating each non-decreasing function by sums of indicator functions on non-decreasing set; See Theorem 1 of \cite{kamae1977stochastic} (Proposition \ref{prop:coupling}).

The following theorem provides a broad class of multivariate statistics that satisfy \PRDS{}. The intuition is quite simple: under the condition $(X_{1}, \ldots, X_{n})\ust Y_{j}$, that $Y_{j}$ ``increases'' would imply that $(X_{1}, \ldots, X_{n})$ ``increases'' and thus all other $Y_{i}$'s because $f_{i}$'s are non-decreasing. 

\begin{theorem}\label{thm:PRDS_generic}
  Let $X_{1}, \ldots, X_{n}$ be random variables on $\R$ and $\{f_{j}: j \in [M]\}$ be a collection of (entrywise) non-decreasing functions on $\R^{n}$. For each $j \in [M]$, define 
\[Y_{j} = f_{j}(X_{1}, \ldots, X_{n}).\]
Then $(Y_{1}, \ldots, Y_{M})$ is \PRDS{} on the subset $\{j\}$ if 
\[(X_{1}, \ldots, X_{n})\ust Y_{j}.\]
\end{theorem}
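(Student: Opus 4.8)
The plan is to prove Theorem~\ref{thm:PRDS_generic} by reducing the multivariate \PRDS{} condition on $(Y_1,\ldots,Y_M)$ to the single hypothesis $(X_1,\ldots,X_n)\ust Y_j$, using the equivalence between \PRDS{} and stochastic monotonicity stated just before the theorem (namely, $(Y_1,\ldots,Y_M)$ is \PRDS{} on $\{j\}$ iff $(Y_1,\ldots,Y_M)\ust Y_j$). So the whole task is to show that $(X_1,\ldots,X_n)\ust Y_j$ implies $(Y_1,\ldots,Y_M)\ust Y_j$. The intuition, as the text says, is the chain: conditioning on a larger value of $Y_j$ pushes the whole vector $(X_1,\ldots,X_n)$ up in the stochastic order, and since each $Y_i = f_i(X_1,\ldots,X_n)$ is a non-decreasing function of that vector, the vector $(Y_1,\ldots,Y_M)$ is pushed up too.

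First I would write $F(x) \triangleq (f_1(x),\ldots,f_M(x))$, which is an entrywise non-decreasing map $\R^n \to \R^M$, so $(Y_1,\ldots,Y_M) = F(X_1,\ldots,X_n)$. Then for any non-decreasing $g:\R^M\to\R$, the composition $g\circ F:\R^n\to\R$ is non-decreasing (composition of non-decreasing maps, where for the vector-valued inner map "non-decreasing" means $x\preceq x' \Rightarrow F(x)\preceq F(x')$). Now apply the definition of $(X_1,\ldots,X_n)\ust Y_j$: by \cref{def:ust}, for every non-decreasing $h:\R^n\to\R$ the map $y\mapsto \E[h(X_1,\ldots,X_n)\mid Y_j=y]$ is bounded and non-decreasing. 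Taking $h = g\circ F$ gives that $y\mapsto \E[g(Y_1,\ldots,Y_M)\mid Y_j=y]$ is bounded and non-decreasing, which is exactly $(Y_1,\ldots,Y_M)\ust Y_j$. Combined with the stated equivalence, this yields \PRDS{} on $\{j\}$.

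One point that needs care is the existence of the regular conditional probability $\P((Y_1,\ldots,Y_M)\in\cdot \mid Y_j = y)$ required by \cref{def:ust}: since $\R^M$ is Polish and $Y_j$ is a real random variable, this exists, and the conditional expectation $\E[g(Y_1,\ldots,Y_M)\mid Y_j]$ can be computed either directly or by first conditioning on $(X_1,\ldots,X_n)$ and using the tower property together with $\sigma(Y_j)\subseteq\sigma(X_1,\ldots,X_n)$; boundedness of $g$ on the relevant range (or truncation by monotone approximation) handles integrability. A second subtlety is measurability of the $f_i$: the theorem should be read with $f_i$ Borel measurable, which is implicit since they define random variables, and I would state this explicitly.

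I do not anticipate a serious obstacle here; the proof is essentially a one-line functoriality argument once the \PRDS{}/$\ust$ equivalence (Theorem~1 of \citet{kamae1977stochastic}, Proposition~\ref{prop:coupling}) is invoked. The only mild care point is being precise about what "non-decreasing" means for the vector-valued map $F$ and confirming that composition preserves it with respect to the product order $\preceq$ on $\R^M$; this is immediate from the coordinatewise definition. The substantive content of the paper lies downstream — verifying the hypothesis $(X_1,\ldots,X_n)\ust Y_j$ for the concrete smoothing functions of Theorems~\ref{thm:combination_test} and~\ref{thm:order_test} — not in this generic reduction.
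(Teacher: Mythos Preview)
Your proposal is correct. Both you and the paper reduce the claim to showing $(Y_1,\ldots,Y_M)\ust Y_j$ from the hypothesis $(X_1,\ldots,X_n)\ust Y_j$, but the routes differ. The paper invokes the Strassen--Kamae coupling (Proposition~\ref{prop:coupling}): for $y_1<y_2$ it constructs coupled copies $X^{(1)}\preceq X^{(2)}$ a.s.\ with the two conditional laws, pushes each coordinate through $f_i$ to get $Y^{(1)}\preceq Y^{(2)}$ a.s., and then applies the test function $h$ pointwise. Your argument is the shorter, coupling-free version: observe that $g\circ F$ is non-decreasing on $\R^n$ whenever $g$ is non-decreasing on $\R^M$ and each $f_i$ is coordinatewise non-decreasing, and apply Definition~\ref{def:ust} directly with test function $h=g\circ F$. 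This bypasses the coupling theorem entirely and is arguably the cleaner way to write the proof; the paper's route has the minor pedagogical advantage of making the monotone coupling explicit. Your remarks on existence of regular conditional probabilities and measurability are appropriate (the aside about the tower property is unnecessary: the identity $g(Y)=(g\circ F)(X)$ already gives equality of the conditional expectations), and your assessment that the substantive work lies downstream in verifying the hypothesis for Theorems~\ref{thm:combination_test} and~\ref{thm:order_test} is accurate.
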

\begin{proof}
  It is left to prove that, for any $y_{1} < y_{2}$ and a bounded non-decreasing function $h$ on $\R^{M}$, 
  \begin{equation}
    \label{eq:goal}
    \E [h(Y_{1}, \ldots, Y_{M})\mid Y_{j} = y_{1}]\le \E [h(Y_{1}, \ldots, Y_{M})\mid Y_{j} = y_{2}].
  \end{equation}
Because $\R^{n}$ is a Polish space, the regular conditional probability\\ $\P((X_{1}, \ldots, X_{n})\in \cdot \mid f_{j}(X_{1}, \ldots, X_{n}) = y)$ exists. Let $P_{1}$ and $P_{2}$ denotes two measures with
\[P_{k}(\cdot) = \P((X_{1}, \ldots, X_{n})\in \cdot \mid f_{j}(X_{1}, \ldots, X_{n}) = y_{k}) ,\quad k = 1, 2.\]
Our condition ensures that $P_{1}\preceq P_{2}$. By Proposition \ref{prop:coupling}, there exists $X_{1}^{(1)}, \ldots, X_{n}^{(1)}$ and $X_{1}^{(2)}, \ldots, X_{n}^{(2)}$ such that 
\[(X_{1}^{(1)}, \ldots, X_{n}^{(1)})\preceq (X_{1}^{(2)}, \ldots, X_{n}^{(2)}),\quad a.s.,\]
and for any Borel set $A\subset \R^{n}$, 
\[\P\lb (X_{1}^{(k)}, \ldots, X_{n}^{(k)})\in A\rb = P_{k}(A) = \P((X_{1}, \ldots, X_{n})\in A \mid Y_{j} = y_{k}).\]
Let 
\[Y_{i}^{(k)} = f_{i}(X_{1}^{(k)}, \ldots, X_{n}^{(k)}), \quad \forall i\in [M], k = 1, 2.\]
Since each $f_{i}$ is non-decreasing, we have
\begin{equation}
  \label{eq:Yj1Yj2}
  Y_{i}^{(1)}\le Y_{i}^{(2)}, \quad \forall i\in [M], \,\, a.s.,
\end{equation}
and for any Borel set $B\subset \R^{M}$ and $k = 1, 2$,
\begin{equation}\label{eq:Ydist}
  \P\lb (Y_{1}^{(k)}, \ldots, Y_{M}^{(k)})\in B\rb = \P((Y_{1}, \ldots, Y_{M})\in B \mid Y_{j} = y_{k}).
\end{equation}
Finally, since $h$ is non-decreasing, \eqref{eq:Yj1Yj2} implies that
\[h(Y_{1}^{(1)}, \ldots, Y_{M}^{(1)})\le h(Y_{1}^{(2)}, \ldots, Y_{M}^{(2)}), \quad a.s.,\]
which further implies that
\[\E h(Y_{1}^{(1)}, \ldots, Y_{M}^{(1)})\le \E h(Y_{1}^{(2)}, \ldots, Y_{M}^{(2)}).\]
The proof of \eqref{eq:goal} is then completed by \eqref{eq:Ydist}.
\end{proof}

\subsection{Weighted averages of independent log-concave random variables}\label{subapp:average}

We first present a powerful result proved by \cite{efron1965increasing}. 
\begin{proposition}\label{prop:efron}[Theorem 1 of \cite{efron1965increasing}; see also \cite{block1985concept}]
  Suppose $X_{1}, \ldots, X_{n}$ are independent random variables on $\R$ or $\Z$ with (potentially distinct) log-concave densities. Then 
  \[(X_{1}, \ldots, X_{n})\ust \sum_{i=1}^{n}X_{i}.\]
\end{proposition}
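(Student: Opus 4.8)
The plan is to argue by induction on $n$, reducing the whole statement to a single two-variable coupling. The case $n=1$ is trivial, and in the inductive step I would peel off one coordinate at a time, so the substantive content is the case $n=2$: if $X_1,X_2$ are independent with log-concave densities $f_1,f_2$ and $S=X_1+X_2$, then the conditional law of $(X_1,X_2)$ given $S=s$ is stochastically increasing in $s$. By Proposition \ref{prop:coupling} this is equivalent to producing, for every pair $s_1<s_2$, a coupling of $(X_1,X_2)\mid S=s_1$ and $(X_1',X_2')\mid S=s_2$ with $X_1\le X_1'$ and $X_2\le X_2'$ almost surely.

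For $n=2$ the conditional density of $X_1$ given $S=s$ is proportional in $x$ to $f_1(x)f_2(s-x)$, and two one-line consequences of log-concavity do the work. First, log-concavity of $f_2$ makes this family monotone-likelihood-ratio increasing in $s$: the $x$-derivative of $\log f_2(s_2-x)-\log f_2(s_1-x)$ equals $(\log f_2)'(s_1-x)-(\log f_2)'(s_2-x)\ge 0$ because $(\log f_2)'$ is non-increasing, so $X_1\mid S=s$ is stochastically increasing in $s$. Second, log-concavity of $f_1$ bounds the increment: shifting the $s_2$-conditional density of $X_1$ left by $\delta:=s_2-s_1$ yields a density proportional to $f_1(x+\delta)f_2(s_1-x)$, whose ratio to the $s_1$-conditional density is $f_1(x+\delta)/f_1(x)$, with non-increasing logarithm; this monotone-likelihood-ratio relation shows the shifted $s_2$-law is stochastically below the $s_1$-law. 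Taking the monotone (quantile) coupling $(X_1,X_1')$ of the two $X_1$-conditional laws, the first fact gives $X_1\le X_1'$ and the second gives $X_1'\le X_1+\delta$; then $X_2:=s_1-X_1$ and $X_2':=s_2-X_1'$ have the correct conditional laws and satisfy $X_2'-X_2=\delta-(X_1'-X_1)\ge 0$, so the pair is coordinatewise ordered almost surely. This settles $n=2$.

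The inductive step is then routine. Granting the claim for every collection of at most $n-1$ independent log-concave variables, set $S_{n-1}=\sum_{i=1}^{n-1}X_i$; by the classical fact that convolution preserves log-concavity, $S_{n-1}$ has a log-concave density, so the $n=2$ result applies to the independent pair $(S_{n-1},X_n)$ and gives $(S_{n-1},X_n)\ust S_n$. For bounded non-decreasing $g$ on $\R^n$, conditioning first on $S_{n-1}$ and using that $(X_1,\ldots,X_{n-1})$ is independent of $X_n$ gives $\E[g(X_1,\ldots,X_n)\mid S_n=s]=\E[\Psi(S_{n-1},X_n)\mid S_n=s]$, where $\Psi(t,u)=\E[g(X_1,\ldots,X_{n-1},u)\mid S_{n-1}=t]$. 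The function $\Psi$ is non-decreasing in $u$ because $g$ is, and non-decreasing in $t$ by the inductive hypothesis applied to $(X_1,\ldots,X_{n-1})$; since $(S_{n-1},X_n)\ust S_n$, the right-hand side is non-decreasing in $s$, which closes the induction.

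I expect the delicate point to be the $n=2$ case, and specifically the observation that stochastic monotonicity of $X_1\mid S=s$ by itself is not enough: a non-decreasing function $g(x_1,x_2)$ restricted to the line $x_1+x_2=s$ need not be monotone in $x_1$, so one cannot simply transport along the first coordinate — the bounded-increment refinement coming from log-concavity of $f_1$ is precisely what lets the coupling drag $X_2$ upward as well. The remaining issues are routine: the $\Z$-valued analogue runs identically with counting measure and discrete likelihood ratios; conditional densities with differing or semi-infinite supports are handled by the usual monotone-likelihood-ratio bookkeeping; existence of regular conditional probabilities is covered by the Polish-space assumption already invoked in Theorem \ref{thm:PRDS_generic}; and the single external ingredient, Pr\'ekopa's theorem on convolutions of log-concave densities, is classical.
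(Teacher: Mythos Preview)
The paper does not prove this proposition: it is quoted as Theorem~1 of \cite{efron1965increasing} (with a pointer to \cite{block1985concept}) and used as a black box in the proof of Theorem~\ref{thm:PRDS_average}. There is therefore no ``paper's own proof'' to compare against.

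That said, your argument is correct and is essentially a clean reconstruction of Efron's original inductive scheme. The two ingredients you isolate in the $n=2$ case---MLR of $X_1\mid S=s$ in $s$ from log-concavity of $f_2$, and the bounded-increment refinement $X_1'\le X_1+\delta$ from log-concavity of $f_1$---are exactly the content of Efron's ``monotonicity lemma,'' and your quantile-coupling formulation makes the coordinatewise ordering transparent. The inductive step via Pr\'ekopa's theorem and the tower argument with $\Psi(t,u)$ is the standard route. One small remark: in the $\Z$-valued case the derivative calculus should be replaced by finite differences of $\log f$, but as you note this is cosmetic.
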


\begin{remark}
  \cite{efron1965increasing} assumes each $X_{i}$ has a PF2 density, which is equivalent to a log-concave density; see e.g. \cite{balabdaoui2014chernoff}. 
\end{remark}

\begin{theorem}\label{thm:PRDS_average}
 Let $X_{1}, \ldots, X_{n}$ be independent but not necessarily identically distributed random variables on $\R$ or $\Z$. Assume that $T\subset \{1, \ldots, n\}$ such that $X_{i}$ has a log-concave density for each $i\in T$. Let
 \begin{equation}
   \label{eq:weighted_average}
   Y_{1} = g\lb\sum_{i\in T}a_{i}X_{i}\rb,
 \end{equation}
where $a_{i}\ge 0$ for all $i\in T$ and $g$ is non-decreasing. For any $j = 2, \ldots, M$, let 
\[Y_{j} = f_{j}(X_{1}, \ldots, X_{n})\]
for some (entrywise) non-decreasing function $f_{j}$. Then $(Y_{1}, \ldots, Y_{M})$ are \PRDS{} with respect to $\{1\}$.
\end{theorem}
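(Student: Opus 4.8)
The plan is to deduce Theorem~\ref{thm:PRDS_average} directly from the generic criterion of Theorem~\ref{thm:PRDS_generic} together with Efron's inequality (Proposition~\ref{prop:efron}). The key observation is that $Y_1 = g\big(\sum_{i \in T} a_i X_i\big)$ is itself an (entrywise) non-decreasing function of $(X_1,\ldots,X_n)$ — it ignores the coordinates outside $T$, depends on each $X_i$, $i \in T$, only through the non-negatively weighted sum $\sum_{i \in T} a_i X_i$, and then applies the non-decreasing map $g$. So $Y_1$ has exactly the form $f_1(X_1,\ldots,X_n)$ required by Theorem~\ref{thm:PRDS_generic}, and it suffices to verify the single stochastic-ordering hypothesis
\[
(X_1,\ldots,X_n) \ust Y_1.
\]

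First I would reduce to the coordinates in $T$. Since the $X_i$ are mutually independent, conditioning on $Y_1$ (a function of $\{X_i\}_{i \in T}$ alone) leaves the distribution of $\{X_i\}_{i \notin T}$ unchanged, so it is enough to show $(X_i)_{i \in T} \ust Y_1$ and then append the independent coordinates, which preserves stochastic monotonicity. Next I would handle the weights: each $a_i X_i$ with $a_i > 0$ still has a log-concave density (log-concavity is preserved under positive scaling), and any coordinate with $a_i = 0$ drops out of the sum and can be treated as an extra independent coordinate. Thus, writing $\tilde X_i = a_i X_i$ for the indices with $a_i>0$, Proposition~\ref{prop:efron} gives $(\tilde X_i)_i \ust \sum_i \tilde X_i = \sum_{i \in T} a_i X_i$. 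Finally, composing with the non-decreasing map $g$ preserves the relation $\ust$ on the conditioning variable: if $\E[h(\cdot)\mid S = s]$ is non-decreasing in $s$ for every non-decreasing $h$, then since $g$ is non-decreasing and $\{Y_1 = y\}$ corresponds to $\{S \in g^{-1}(y)\}$, the conditional expectations remain non-decreasing in $y$ (one can make this precise by a monotone coupling, exactly as in the proof of Theorem~\ref{thm:PRDS_generic}, or by noting $\E[h(X)\mid g(S)=y]$ is an average of $\E[h(X)\mid S=s]$ over $s \in g^{-1}(y)$, an interval). Having established $(X_1,\ldots,X_n)\ust Y_1$, Theorem~\ref{thm:PRDS_generic} immediately yields that $(Y_1,\ldots,Y_M)$ is \PRDS{} on $\{1\}$.

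The main obstacle, and the one deserving the most care, is the bookkeeping around the map $g$ and the possibly-degenerate weights: the relation $\ust$ is stated for conditioning on a random \emph{variable}, and one must check that $\ust$ is stable under applying a non-decreasing (not necessarily continuous or invertible) transformation to that conditioning variable. The cleanest route is to observe that this stability is essentially already contained in Theorem~\ref{thm:PRDS_generic}'s coupling argument — a monotone coupling of the conditional laws $\P(\,\cdot \mid S = s)$ for $s$ ranging over a (random) level set of $g$ produces a monotone coupling of the conditional laws $\P(\,\cdot\mid g(S) = y)$ — so I would either invoke that directly or spell out the one-line coupling. Once that point is settled, everything else is routine: log-concavity under positive scaling, independence making the $T$-complement coordinates inert, and a direct appeal to Proposition~\ref{prop:efron} and Theorem~\ref{thm:PRDS_generic}.
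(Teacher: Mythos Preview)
Your proposal is correct and follows essentially the same route as the paper: reduce via Theorem~\ref{thm:PRDS_generic} to checking $(X_1,\ldots,X_n)\ust Y_1$, rescale the coordinates with positive $a_i$ so that Efron's result (Proposition~\ref{prop:efron}) applies directly to their unweighted sum, and then use Lemma~\ref{lem:ust_concat} to append the independent coordinates (those outside $T$ or with $a_i=0$). The only difference is cosmetic: the paper handles zero weights by setting $b_i=1$ when $a_i=0$ so that the rescaling is always invertible, and it passes over the step ``$\ust$ is preserved when the conditioning variable is composed with a non-decreasing $g$'' in one line (``equivalent''), whereas you flag and justify that point more carefully---which is a good instinct, since it is the only place where a reader might pause.
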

\begin{proof}
  By Theorem \ref{thm:PRDS_generic}, it is left to show that
\[(X_{1}, \ldots, X_{n})\ust Y_{1}.\]
Since $g$ is non-decreasing, this is equivalent to 
\begin{equation}
  \label{eq:goal_sum}
  (X_{1}, \ldots, X_{n})\ust \sum_{i=1}^{n}a_{i}X_{i}
\end{equation}
Let 
\[\td{X}_{i} = b_{i}X_{i},\quad \mbox{where }b_{i} = \left\{
    \begin{array}{cc}
      a_{i} & (a_{i}\not = 0)\\
      1 & (a_{i} = 0)
    \end{array}
\right.,\] 
and define 
\[\td{f}_{j}(x_{1}, \ldots, x_{n}) = f_{j}\lb\frac{x_{1}}{b_{1}}, \ldots, \frac{x_{n}}{b_{n}}\rb\]
Since all $a_{i}$'s are non-negative and $b_{i}$'s are positive, each $\td{f}_{j}$ is non-decreasing. Also $\td{X}_{i}$ has a log-concave density since $X_{i}\mapsto \td{X}_{i}$ is a linear mapping. Let $A = \{i: a_{i} \not = 0\}$, then 
\[\sum_{i=1}^{n}a_{i}X_{i} = \sum_{i\in A}\td{X}_{i}.\]
By Proposition \ref{prop:efron}, 
\[(\td{X}_{i})_{i\in A}\ust Y_{1}.\]
Since $\td{X}_{i}$'s are independent, by Lemma \ref{lem:ust_concat}, 
\[(\td{X}_{1}, \ldots, \td{X}_{n})\ust Y_{1}\Longrightarrow (X_{1}, \ldots, X_{n})\ust Y_{1}.\]
The equation \eqref{eq:goal_sum} is then proved.
\end{proof}

Now we prove Theorem \ref{thm:combination_test} as a special case of Theorem \ref{thm:PRDS_average}.
\begin{proof}[\textbf{(Theorem \ref{thm:combination_test})}]
  Fix any null node $v$. For each node $c\in \mathcal{C}_v$, let $X_c = H_{v, c}^{-1}(p_c)$. Since $H_{v, c}$ is monotone increasing and $p_c\sim \mathrm{Uniform}([0, 1])$, for any $x\in \R$, 
  \[\P(X_c \le x) = \P(p_c\le H_{v, c}(x)) = H_{v, c}(x).\]
  The density function of $p_c$ is then $H_{v, c}'$, which is log-concave. Since $G_v$ and $H_{v, c}^{-1}$ are monotone increasing for any $(v, c)$, each smoothed p-value and original p-value is a monotone increasing transformation of $(X_1, \ldots, X_n)$. Setting $T = \mathcal{C}_b, g = G_v$ and $a_c = 1$ for all $c\in T$ in Theorem \ref{thm:PRDS_average}, we conclude that the smoothed p-values satisfy \PRDS{} on the subset $\{v\}$. Since this holds for any null node $v$, th p-values satisfy \PRDS{} on the subset of nulls.
\end{proof}

\subsection{Local order statistics}\label{subapp:order}

We first present a powerful result proved by \cite{block1987probability}. 
\begin{proposition}[Corollary in Section 3 of \cite{block1987probability}]\label{prop:block1987}
  Let $X_{1}, \ldots, X_{n}$ be independent  random variables on $\R$ with a common continuous distribution. Let $X_{(1)}\le X_{(2)}\le \cdots \le X_{(n)}$ be the ordered statistics. For any given subsets $\{k_{1}, \ldots, k_{r}\}$ of $\{1, \ldots, n\}$, 
\[(X_{1}, \ldots, X_{n})\ust (X_{(k_{1})}, \ldots, X_{(k_{r})}).\]
\end{proposition}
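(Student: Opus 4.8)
The plan is to verify the defining property of $\ust$ from Definition \ref{def:ust} directly, by constructing a monotone coupling of conditional laws. Write $Y := (X_{(k_1)},\ldots,X_{(k_r)})$. It suffices to show that for every pair $y \preceq y'$ of admissible values of $Y$ there is a coupling of $\mathcal{L}(X \mid Y = y)$ and $\mathcal{L}(X \mid Y = y')$ under which the first vector is coordinatewise dominated by the second almost surely; the coupling characterization of stochastic dominance (Proposition \ref{prop:coupling}) then yields $\mathcal{L}(X\mid Y=y) \preceq \mathcal{L}(X\mid Y=y')$, i.e.\ $(X_1,\ldots,X_n)\ust(X_{(k_1)},\ldots,X_{(k_r)})$. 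Since $F$ is continuous the order statistics are almost surely strictly ordered, so I only need to treat $y,y'$ in the open region $\{t_1<\cdots<t_r\}$, and afterwards choose a version of $t\mapsto\E[g(X)\mid Y=t]$ that is monotone on all of $\R^r$ by a routine extension off this $Y$-null set.

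The structural ingredient I would invoke is the classical conditional description of the full order-statistic vector given a subset of its coordinates. Set $k_0=0<k_1<\cdots<k_r<k_{r+1}=n+1$ with the conventions $X_{(0)}=-\infty$, $X_{(n+1)}=+\infty$. Conditionally on $(X_{(k_1)},\ldots,X_{(k_r)})=(y_1,\ldots,y_r)$, the order statistics strictly between consecutive conditioned positions are independent across the $r+1$ blocks, and within block $j$ (positions $k_j+1,\ldots,k_{j+1}-1$) they are distributed as the sorted values of $k_{j+1}-k_j-1$ i.i.d.\ draws from $F$ restricted to $(y_j,y_{j+1})$. Moreover $(X_1,\ldots,X_n)$ is recovered from $(X_{(1)},\ldots,X_{(n)})$ by reading its entries in the uniformly random rank order $\sigma\in S_n$, and $\sigma$ is independent of the entire order-statistic vector; hence conditionally on $Y=y$ the permutation $\sigma$ remains uniform and independent of $(X_{(1)},\ldots,X_{(n)})$. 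So sampling from $\mathcal{L}(X\mid Y=y)$ amounts to: fill each block with sorted i.i.d.\ draws from the corresponding restricted law, interleave the fixed values $y_1,\ldots,y_r$ at positions $k_1,\ldots,k_r$ to obtain $(X_{(1)},\ldots,X_{(n)})$, then apply $\sigma$.

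For the coupling I would drive both realizations with shared randomness. Represent a draw from $F$ restricted to $(a,b)$ as $F^{-1}\bigl((1-U)F(a)+U\,F(b)\bigr)$ with $U\sim\mathrm{Uniform}[0,1]$; since $F$ and $F^{-1}$ are nondecreasing this is jointly nondecreasing in $(a,b)$ (with the obvious one-sided versions $F^{-1}(U F(y_1))$ and $F^{-1}(F(y_r)+U(1-F(y_r)))$ for the bottom and top blocks). Using the same $\sigma$ and the same uniforms $\{U^{(j,\ell)}\}$ for the $y$- and $y'$-realizations — the block sizes depend only on the indices $k_j$, so both need exactly $n-r$ uniforms — every generated point in the $y$-realization is at most the corresponding point in the $y'$-realization, and every fixed value satisfies $y_j\le y_j'$. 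Because sorting is monotone for the coordinatewise order on $\R^m$ (the $i$-th smallest of a tuple does not decrease when the tuple is replaced by a coordinatewise-larger one), the interleaved vectors obey $(X_{(1)},\ldots,X_{(n)})\preceq(X'_{(1)},\ldots,X'_{(n)})$ a.s., and applying the common permutation $\sigma$ preserves this, giving $(X_1,\ldots,X_n)\preceq(X_1',\ldots,X_n')$ a.s.

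The main obstacle I expect is not the coupling but making the conditional-law description airtight: the conditioning events are null, so one works with regular conditional probabilities (legitimate since $\R^n$ is Polish), uses the Markov/independence structure of the order statistics of an i.i.d.\ continuous sample, and disposes of the $Y$-null set of degenerate configurations (ties, or flat spots of $F$ between consecutive conditioned positions) when selecting a globally monotone version of the conditional expectation. The case $r=1$ — all that Theorem \ref{thm:order_test} requires — is substantially simpler: there are only two blocks, and the quantile representations of their draws are manifestly nondecreasing in the single conditioned value $y$, so the monotone coupling is immediate.
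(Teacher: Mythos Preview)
The paper does not give its own proof of this proposition: it is quoted as a result of Block, Savits, and Shaked (1987) and then invoked as a black box in the proof of Theorem~\ref{thm:PRDS_order}. So there is nothing in the paper to compare your argument against; you have supplied a self-contained proof where the paper simply defers to the literature.

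Your coupling argument is correct in substance. The three ingredients you rely on---(i) the block-independence structure of the full order-statistic vector conditional on a subset of its coordinates, (ii) the exchangeability decomposition $X_i = X_{(\sigma(i))}$ with $\sigma$ uniform on $S_n$ and independent of $(X_{(1)},\ldots,X_{(n)})$ for continuous $F$, and (iii) the quantile-transform representation $F^{-1}\bigl((1-U)F(a)+UF(b)\bigr)$ of a draw from $F$ restricted to $(a,b)$, which is jointly nondecreasing in $(a,b)$---are all standard and assembled correctly. The step where you pass from coupled unsorted block draws to coupled sorted block draws via the monotonicity of the sort map for $\preceq$ is the only place one might pause, and it is fine: if $u\preceq v$ in $\R^m$ then $u_{(i)}\le v_{(i)}$ for every $i$. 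Sharing $\sigma$ then transports the coordinatewise ordering from the order-statistic vector to $(X_1,\ldots,X_n)$. Your handling of the measure-zero technicalities (regular conditional probabilities on a Polish space, disposing of ties and flat spots as a $Y$-null set) is appropriate for the level of rigor in the paper, and your observation that the downstream application only needs $r=1$ is accurate.
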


Theorem \ref{thm:order_test} is a direct consequence of the following theorem.
\begin{theorem}\label{thm:PRDS_order}
  Let $X_{1}, \ldots, X_{n}$ be independent and identically distributed random variables with a common continuous distribution. Let 
  \begin{equation}
    \label{eq:ordered_stat}
    Y_{1} = g\lb X_{(k); T}\rb,
  \end{equation}
where $g$ is an non-decreasing function, $T$ is a subset of $[n]$ and $X_{(k); T}$ denotes the $k$-th order statistics of $(X_{i})_{i\in T}$. For any $j = 2, \ldots, M$, let 
\[Y_{j} = f_{j}(X_{1}, \ldots, X_{n})\]
for some non-decreasing function $f_{j}$. Then $(Y_{1}, \ldots, Y_{M})$ are PRDS with respect to $\{1\}$.
\end{theorem}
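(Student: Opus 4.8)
The plan is to mirror exactly the structure of the proof of Theorem \ref{thm:PRDS_average}, replacing the role of \cite{efron1965increasing}'s sum result (Proposition \ref{prop:efron}) with \cite{block1987probability}'s order-statistic result (Proposition \ref{prop:block1987}). By Theorem \ref{thm:PRDS_generic}, since each $f_j$ and $g$ is non-decreasing, it suffices to establish the single stochastic-monotonicity relation
\[
(X_1, \ldots, X_n) \ust Y_1.
\]
Because $g$ is non-decreasing, Lemma-type arguments on stochastic ordering (the composition of a $\ust$ relation with a monotone map, as already used implicitly in the proof of Theorem \ref{thm:PRDS_average}) reduce this to showing
\[
(X_1, \ldots, X_n) \ust X_{(k); T},
\]
the $k$-th order statistic of the sub-collection $(X_i)_{i \in T}$.

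Next I would invoke Proposition \ref{prop:block1987} with $r = 1$ and the single index $k_1 = k$, but applied to the sub-family $(X_i)_{i \in T}$ rather than the full family. This gives
\[
(X_i)_{i \in T} \ust X_{(k); T}.
\]
The proposition applies here because $(X_i)_{i \in T}$ are i.i.d.\ with a common continuous distribution, which is inherited from the assumption on $(X_1, \ldots, X_n)$. Finally, to pass from the sub-family back to the full vector, I would use Lemma \ref{lem:ust_concat} (the same concatenation lemma used at the end of the proof of Theorem \ref{thm:PRDS_average}): since the coordinates $\{X_i : i \notin T\}$ are independent of $(X_i)_{i \in T}$ and hence of $X_{(k); T}$, appending them preserves the stochastic-increasing relation, so
\[
(X_i)_{i \in T} \ust X_{(k); T} \;\Longrightarrow\; (X_1, \ldots, X_n) \ust X_{(k); T}.
\]
Combining with $g$ non-decreasing yields $(X_1, \ldots, X_n) \ust Y_1$, and Theorem \ref{thm:PRDS_generic} then delivers the \PRDS{} conclusion on $\{1\}$.

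I do not anticipate a serious obstacle: the argument is a direct structural analogue of Theorem \ref{thm:PRDS_average}, with Proposition \ref{prop:block1987} doing the heavy lifting in place of Proposition \ref{prop:efron}. The one point requiring a little care is the handling of the sub-collection $T \subsetneq [n]$ — one must be careful to apply Block--Savits--Shaked to the i.i.d.\ family $(X_i)_{i\in T}$ and then use independence of the remaining coordinates to lift the relation, rather than trying to apply the proposition to the full vector and then project (which would not directly give an order statistic of a proper subset). This is precisely what Lemma \ref{lem:ust_concat} is designed to handle, so the proof remains short.
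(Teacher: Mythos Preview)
Your proposal is correct and follows essentially the same route as the paper: reduce via Theorem~\ref{thm:PRDS_generic} to $(X_1,\ldots,X_n)\ust X_{(k);T}$, use Lemma~\ref{lem:ust_concat} and independence to restrict attention to the sub-family $(X_i)_{i\in T}$, and then invoke Proposition~\ref{prop:block1987}. The paper's proof is terser but the logical structure is identical, including your observation that the subset $T$ is what necessitates Lemma~\ref{lem:ust_concat}.
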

\begin{proof}
  Similar to \eqref{eq:goal_sum} in proof of Theorem \ref{thm:PRDS_average}, it is left to prove that 
  \begin{equation*}
    (X_{1}, \ldots, X_{n})\ust X_{(k); S}.
  \end{equation*}
 Since $X_{1}, \ldots, X_{n}$ are independent, by Lemma \ref{lem:ust_concat}, we can assume that $T = [n]$ without loss of generality, i.e.
 \begin{equation}
   \label{eq:goal_order}
   (X_{1}, \ldots, X_{n})\ust X_{(k)}.
 \end{equation}
This is guaranteed by Proposition \ref{prop:block1987}.
\end{proof}

\subsection{Technical lemmas}\label{subapp:lemmas}
We first present a coupling property of stochastic orderings.

\begin{proposition}[Theorem 1 of \cite{kamae1977stochastic}]\label{prop:coupling}
  Let $E$ be a partially ordered Polish space (i.e. complete separable metrizable topological space). For any two distributions $P_{1}$ and $P_{2}$ on $E$, $P_{1}\preceq P_{2}$ iff there exists a distribution $P$ on $E \times E$ equipped with the product topology, which is supported on the set $\{(x, y)\in E\times E\}$ with first marginal $P_{1}$ and second marginal $P_{2}$. Equivalently, there exists a random vector $(X, Y)$ on $E\times E$, such that $X \preceq Y$ almost surely.
\end{proposition}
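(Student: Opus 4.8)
The plan is to prove the two implications separately, with essentially all the work in the ``only if'' direction. The ``if'' direction is immediate: if $(X, Y)$ is a random vector on $E \times E$ with $X \sim P_1$, $Y \sim P_2$, and $X \preceq Y$ almost surely, then for any bounded nondecreasing $f$ we have $f(X) \le f(Y)$ almost surely, whence $\int f \, dP_1 = \E[f(X)] \le \E[f(Y)] = \int f \, dP_2$, so $P_1 \preceq P_2$. For the converse I would first recast the hypothesis in set-theoretic form. Calling a Borel set $U$ an \emph{up-set} if $x \in U$ and $x \preceq y$ imply $y \in U$, the indicator $\mathbf{1}_U$ is bounded and nondecreasing, so $P_1 \preceq P_2$ gives $P_1(U) \le P_2(U)$ for every up-set $U$; conversely the layer-cake identity $f = c + \int_0^\infty \mathbf{1}_{\{f > t\}} \, dt$ writes any bounded nondecreasing $f$ as a mixture of indicators of up-sets, so the two conditions are equivalent. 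The target coupling is then a probability measure $\mu$ on $E \times E$ with marginals $P_1$ and $P_2$ supported on the closed set $\Gamma = \{(x,y) : x \preceq y\}$, i.e. a transportation problem with a prescribed support constraint.

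I would establish existence of $\mu$ first on a finite poset and then lift to the Polish case. On a finite poset the desired $\mu$ is a nonnegative coupling matrix $\pi_{xy}$ with row sums $P_1(x)$, column sums $P_2(y)$, and $\pi_{xy} = 0$ unless $x \preceq y$. By the feasibility criterion for capacitated transportation problems (equivalently the max-flow min-cut theorem, or Hall's marriage theorem applied to the bipartite relation $x \preceq y$), such a $\pi$ exists iff $P_1(A) \le P_2(U_A)$ for every set $A$ of sources, where $U_A$ is the up-closure of $A$. Since $A \subseteq U_A$ and $U_A$ is an up-set, the inequality $P_1(U_A) \le P_2(U_A)$ together with $P_1(A) \le P_1(U_A)$ delivers exactly this condition, so a monotone coupling exists in the finite case.

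To reach a general Polish space $E$, I would approximate $P_1$ and $P_2$ by finitely supported measures. Using tightness, restrict to a compact set carrying all but $\eps$ of each marginal's mass; on the compact poset choose a sequence of order-consistent finite partitions with mesh tending to zero, push the marginals onto representative points to obtain finitely supported $P_1^{(n)}, P_2^{(n)}$ that converge weakly to $P_1, P_2$ while still satisfying the up-set domination, and apply the finite case to obtain couplings $\pi_n$ supported on $\{x \preceq y\}$. These $\pi_n$ have weakly convergent marginals and are therefore tight, so by Prokhorov's theorem a subsequence converges weakly to some $\mu$. Continuity of the coordinate projections transfers the marginals to $P_1$ and $P_2$, and since $\Gamma$ is closed the portmanteau theorem gives $\mu(\Gamma) \ge \limsup_n \pi_n(\Gamma) = 1$; any pair $(X, Y)$ with law $\mu$ is then the sought monotone coupling.

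The main obstacle is this lifting step. One must choose the finite partitions so that the discretized measures genuinely inherit $P_1^{(n)} \preceq P_2^{(n)}$, which requires the partitions to respect the partial order rather than being arbitrary small-diameter cells, and one must simultaneously control the weak limit so that $\mu$ carries the correct marginals and stays supported on $\Gamma$. Closedness of the order relation $\Gamma \subseteq E \times E$ is exactly what makes the portmanteau argument close, and it is the standing regularity assumption on the partially ordered Polish space under which the theorem holds.
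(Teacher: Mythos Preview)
The paper does not supply its own proof of this proposition: it is quoted verbatim as Theorem~1 of \cite{kamae1977stochastic} and used as a black box in the proof of Theorem~\ref{thm:PRDS_generic}. So there is no in-paper argument to compare against.

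Evaluated on its own terms, your outline is broadly sound. The ``if'' direction and the finite-poset case via Hall/max-flow are correct. The difficulty you flag is real and is where your sketch stops short of a proof: on a general partially ordered Polish space there is no obvious supply of ``order-consistent finite partitions with mesh tending to zero,'' and pushing the marginals onto representative points while preserving $P_1^{(n)}\preceq P_2^{(n)}$ requires a construction you have not given. Without that, the Prokhorov/portmanteau closing step has nothing to act on. The route taken in \cite{kamae1977stochastic} sidesteps this discretization entirely: it invokes Strassen's marginal theorem, which for a closed set $\Gamma\subset E\times E$ gives a coupling with marginals $P_1,P_2$ supported on $\Gamma$ as soon as $P_1(F)\le P_2\bigl(\{y:\exists\,x\in F,\ (x,y)\in\Gamma\}\bigr)$ for every closed $F$; when $\Gamma=\{x\preceq y\}$ the right-hand set is the up-closure of $F$, and the inequality follows directly from the up-set formulation of $P_1\preceq P_2$ that you already established. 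If you wish to keep the finite-approximation strategy, you would need to supply the missing order-respecting partition construction explicitly; otherwise the Strassen route is both shorter and complete.
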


Next we prove a useful property of stochastic monotonicity.

\begin{lemma}\label{lem:ust_concat}
Assume that the sample space of the random element $(X, Y, Z)$ is partially ordered. If $X\ust Y$ and $Z\perp (X, Y)$, then
\[(X, Z)\ust Y, \quad (X, Z)\ust (Y, Z)\]  
\end{lemma}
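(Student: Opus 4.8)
The plan is to unwind the definition of $\ust$ (Definition \ref{def:ust}) and exploit the independence of $Z$ from $(X,Y)$ to reduce everything to the hypothesis $X \ust Y$. For the first assertion $(X,Z)\ust Y$, I would fix a bounded nondecreasing function $g$ on the product sample space of $(X,Z)$ and compute the conditional expectation $\E[g(X,Z)\mid Y = y]$. Because $Z \perp (X,Y)$, conditioning on $Y=y$ leaves the distribution of $Z$ unchanged and independent of $X$ given $Y=y$; so by Fubini/the tower property, $\E[g(X,Z)\mid Y=y] = \E_Z\big[\,\E[g(X,z)\mid Y=y]\,\big|_{z=Z}\big] = \int \E[g(X,z)\mid Y=y]\, dP_Z(z)$. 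For each fixed $z$, the map $x \mapsto g(x,z)$ is bounded and nondecreasing, so $X\ust Y$ gives that $y \mapsto \E[g(X,z)\mid Y=y]$ is bounded and nondecreasing. Integrating a family of nondecreasing functions against the fixed probability measure $P_Z$ preserves monotonicity (and boundedness), so $y \mapsto \E[g(X,Z)\mid Y=y]$ is bounded nondecreasing, which is exactly $(X,Z)\ust Y$.

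For the second assertion $(X,Z)\ust (Y,Z)$, I would fix a bounded nondecreasing $g$ on the sample space of $(X,Z)$ and show $(y,z)\mapsto \E[g(X,Z)\mid Y=y, Z=z]$ is bounded nondecreasing. Using $Z\perp(X,Y)$ once more, conditioning additionally on $Z=z$ does not alter the conditional law of $X$ given $Y=y$, and it fixes the second coordinate at $z$: thus $\E[g(X,Z)\mid Y=y,Z=z] = \E[g(X,z)\mid Y=y]$. Monotonicity in $y$ for each fixed $z$ follows from $X\ust Y$ applied to the nondecreasing function $x\mapsto g(x,z)$; monotonicity in $z$ for each fixed $y$ follows because $g$ is nondecreasing in its second argument, so $z\mapsto g(x,z)$ is nondecreasing pointwise in $x$ and hence its conditional expectation given $Y=y$ is nondecreasing in $z$. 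A function that is nondecreasing in each coordinate separately need not be jointly nondecreasing in general, but here we need nondecreasing with respect to the product partial order, which is precisely coordinatewise monotonicity — so separate monotonicity in $y$ and in $z$ suffices. Boundedness is immediate.

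The only mild technical point, and the step I expect to require the most care, is the manipulation of regular conditional probabilities: I need that the regular conditional distribution of $X$ given $Y=y$ coincides, $P_Z$-a.s.\ and for $P_{(Y,Z)}$-a.e.\ $(y,z)$, with the regular conditional distribution of $X$ given $(Y,Z)=(y,z)$, which is where independence $Z\perp(X,Y)$ enters. This is a standard fact for Polish sample spaces (so that regular conditional probabilities exist, as assumed in Definition \ref{def:ust}), proved by checking the defining disintegration identity against product test sets and appealing to a monotone class / $\pi$-$\lambda$ argument. Once that identification is in hand, the two assertions follow by the monotonicity-preservation-under-integration argument sketched above.
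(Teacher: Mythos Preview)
Your proposal is correct and follows essentially the same route as the paper's proof: for the first claim you integrate out $Z$ using independence and reduce to a single nondecreasing test function of $X$, and for the second you identify $\E[g(X,Z)\mid Y=y,Z=z]=\E[g(X,z)\mid Y=y]$ and check coordinatewise monotonicity in $(y,z)$ exactly as the paper does via the two-step chain $\E[h(X,z)\mid Y=y]\le \E[h(X,z')\mid Y=y]\le \E[h(X,z')\mid Y=y']$. Your added remark on regular conditional probabilities is more careful than the paper's treatment but does not change the argument.
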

\begin{proof}
  Let $h$ be any bounded non-decreasing function on the domain of $(X, Z)$. Since $Z \perp (X, Y)$, 
\[\E [h(X, Z)\mid Y] = \E [\E_{Z}[h(X, Z)]\mid Y] = \E \left[\int h(X, z)dP_{Z}(z)\mid Y\right].\]
Let $\td{h}(x) = \int h(x, z)dP_{Z}(z)$. Since $h(x, z)$ is non-decreasing in $x$ for any $z$, $\td{h}(x)$ is also bounded non-decreasing in $x$. Since $X\ust Y$, $\E [\td{h}(X)\mid Y = y]$ is non-decreasing in $y$. This proves the first result. 

For the second result, given any $y, z$,
\begin{align*}
  &\E [h(X, Z)\mid Y = y, Z = z] =  \E [h(X, z)\mid Y = y, Z = z] = \E [h(X, z)\mid Y = y].
\end{align*}
For any $y\preceq y'$ and $z \preceq z'$, 
\[\E [h(X, z)\mid Y = y]\le \E [h(X, z')\mid Y = y]\le \E [h(X, z')\mid Y = y']\]
where the first inequality uses the fact that $h(X, z)$ is non-decreasing in $z$, and the second inequality uses the fact that $h(x, z')$ is non-decreasing in $x$ and $X\ust Y$. Therefore, 
\[\E [h(X, Z)\mid Y = y, Z = z]\le \E [h(X, Z)\mid Y = y', Z = z'].\]
\end{proof}

\section{Other Technical Details}\label{sec:proofs}

We here recall some notation for the benefit of the reader. For each $v\in\{1,\cdots, n\}=\mathcal{V}$ we can observe a random variable $p_v$.  Let $\bar{\mathcal{S}}\subset\mathcal{V}$; we assume $\{p_v\}_{v\in\bar{\mathcal{S}}}\overset{\mathrm{i.i.d}}{\sim}\mathrm{Uniform}[0,1]$.  $\mathcal{G} = (\mathcal{V}, \mathcal{E})$ whose edges encode logical constraints on the hypotheses: $v \rightarrow w \implies (v\in\bar{\mathcal{S}}\implies w\in\bar{\mathcal{S}})$.  For each $v$ we have that $\mathcal{C}_v$ denotes the union of $v$ with all of its descendants in the graph $\mathcal G$ and $f_v(x_{\mathcal{C}_v},x_{\mathcal{V}\backslash\mathcal{C}_v})$ denotes any function. The smoothed $\psmooth$-value for node $v$ is then,
\[
    F_v(c;x_{\mathcal{V}\backslash\mathcal{C}_v}) \triangleq \mathrm{pr}(f_v( u_{\mathcal{C}_v},x_{\mathcal{V}\backslash\mathcal{C}_v}) \leq c),\quad \psmooth_v \triangleq F_v(f_v(p_{\mathcal{C}_v},p_{\mathcal{V}\backslash\mathcal{C}_v});p_{\mathcal{V}\backslash\mathcal{C}_v}),
\]
where $\{ u_w\}_{w\in\mathcal{C}_v}$ \iid{} $\mathrm{Uniform}[0,1]$.

\subsection{Smoothed test statistics}

The starting point for this work is \cref{lem:smoothp}, which we prove here.

\begin{proof}
  Fix any $v\in\bar{\mathcal{S}}$. The meaning of the graph structure indicates that $\mathcal{C}_v \subset \bar{\mathcal{S}}$
  
  \begin{enumerate}[(a)]
  \item Our assumption that the null p-values are \iid{} as $\mathrm{Uniform}[0, 1]$ and independent of non-null p-values imply that
    \[p_{\mathcal{C}_v} \stackrel{d}{=} u_{\mathcal{C}_v}\mid p_{\mathcal{V}\backslash\mathcal{C}_v}.\]
    As a result,
    \[\td{p}_v\stackrel{d}{=}F_v(f_v(u_{\mathcal{C}_v}; p_{\mathcal{V}\backslash\mathcal{C}_v}))\mid p_{\mathcal{V}\backslash\mathcal{C}_v}.\]
    By definition, $F_v(f_v(u_{\mathcal{C}_v}; p_{\mathcal{V}\backslash\mathcal{C}_v}))$ is superuniform.
    % Fix any $v\in\bar{\mathcal{S}}$.  The meaning of the graph structure indicates that $\mathcal{C}_v \subset \bar{\mathcal{S}}$, so $\{p_w\}_{w\in\mathcal{C}_v}$ are independent uniforms. Moreover, our assumption that the null p-values are mutually independent and independent of non-null p-values imply that $\{p_w\}_{w\in\mathcal{C}_v}$ are independent of the p-values outside $\mathcal{C}_v$. Now conditioned on any values of $p_{\mathcal{V}\backslash\mathcal{C}_v}$: the variable $\psmooth_v$ is calculated as the cumulative distribution function of $f_v(x_{\mathcal{C}_v},x_{\mathcal{V}\backslash\mathcal{C}_v})$ applied to itself, which must be super-uniform.  Since it is superuniform conditioned on any fixed value of $p_{\mathcal{V}\backslash\mathcal{C}_v}$, it must also be marginally superuniform. 
  \item Since null p-values are independent and superuniform , 
    \[p_{\mathcal{C}_v}\preceq u_{\mathcal{C}_v}\mid p_{\mathcal{V}\backslash\mathcal{C}_v},\]
    where $\preceq$ denotes entrywise stochastic dominance. Since $f_v$ is nondecreasing in $p_{\mathcal{C}_v}$, 
    \[f_v(p_{\mathcal{C}_v}, p_{\mathcal{V}\backslash\mathcal{C}_v})\succeq f_v(u_{\mathcal{C}_v}, p_{\mathcal{V}\backslash\mathcal{C}_v})\mid p_{\mathcal{V}\backslash\mathcal{C}_v}.\]
    Further, since $F_v$ is nondecreasing,
    \[\td{p}_v\succeq F_v(f_v(u_{\mathcal{C}_v}, p_{\mathcal{V}\backslash\mathcal{C}_v}))\mid p_{\mathcal{V}\backslash\mathcal{C}_v}.\]
    As in case (a), we conclude that $\td{p}_v$ is superuniform.
  \end{enumerate}
\end{proof}

\subsection{Familywise error rate control}\label{subapp:fwer}

% \Cref{lem:smoothp} enables us to use standard false-discovery procedures such as the one introduced by \citet{meijer:goeman:2015:dag-fwer}. 
We apply the method of \cite{meijer:goeman:2015:dag-fwer} for familywise error rate control based on the Sequential Rejection Principle developed in \citet{goeman:solari:2010:sequentialrp}.  For the benefit of the reader, we sketch the main ideas below.  

The starting-place for this analysis is an application of the Sequential Rejection Principle of Goeman and Solari.  

\begin{definition}
Fix $n \in \mathbb{N}$, let $\mathcal{P}$ denote the power set $\{1\cdots n\}$, let $\pi:\ \{1\cdots n\} \times \mathcal{P} \rightarrow \mathbb{R}$.  Let $\rejected_0 \triangleq \emptyset$ and recursively define
\begin{align}
\rejected_i(p,\pi) \triangleq \rejected_{i-1}(p,\pi) \cup \{v: p_v \le  \pi(v,\rejected_{i-1}(p,\pi))\}
\end{align}
The limit of this process, $\rejected(p,\pi) \triangleq \cup_i^n \rejected_i(p,\pi)$, is said to be the final result of sequential rejection on $p$ via $\pi$.  
\end{definition}
Intuitively, $\pi$ is an object which uses the current set of rejections to produce a weight for each hypothesis.  These weights are then used to decide whether further rejections may be made.  Goeman and Solari provide conditions on $\pi$ such that this algorithm conserves familywise error.

\begin{theorem}[Goeman and Solari] \label{thm:goeman-solari}
  Let $\mathbb{M}\subset \mathcal{P}$.  For each $\mathcal{S} \in \mathbb{M}$ let $P_\mathcal{S}$ denote a probability distribution for the variables $p_1,p_2 \cdots p_n \in \mathbb{R}$.   Let $\pi:\ \{1 \cdots n\} \times \mathcal{P}\rightarrow \mathbb{R}$ satisfy $P_\mathcal{S}(\{v:\ p_v \leq \pi(v,\mathcal{S})\} \subset \mathcal{S}) \geq 1-\alpha$ for every $\mathcal{S} \in \mathbb{M}$ (this will be referred to as the ``single-step condition'').  Let us further assume that $\pi(v,\rejected_i(p,\pi)) \leq \pi(v,\mathcal{S})$ for any $i,p$ and any $\mathcal{S}\in\mathbb{M}$ such that $\mathcal{S}\supseteq \rejected_i(p,\pi)$ and any $v\notin \mathcal{S}$ (this will be referred to as the ``monotonicity condition'').  It follows that 
  \[
  P_M(\rejected(p,\pi) \subset \mathcal{S}) \geq 1-\alpha
  \]
  for any $\mathcal{S} \in \mathbb{M}$.

\end{theorem}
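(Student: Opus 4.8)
The plan is to fix an arbitrary $\mathcal{S}\in\mathbb{M}$ and show that the ``single-step'' good event already forces $\rejected(p,\pi)\subset\mathcal{S}$ deterministically. First I would define
\[
A \triangleq \{p:\ \{v:\ p_v\le\pi(v,\mathcal{S})\}\subset\mathcal{S}\},
\]
so that $P_\mathcal{S}(A)\ge 1-\alpha$ is precisely the single-step condition. It then suffices to establish the pathwise implication ``$p\in A$ implies $\rejected(p,\pi)\subset\mathcal{S}$'', since this gives $\{p:\ \rejected(p,\pi)\subset\mathcal{S}\}\supseteq A$ and hence $P_\mathcal{S}(\rejected(p,\pi)\subset\mathcal{S})\ge P_\mathcal{S}(A)\ge 1-\alpha$.

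To prove the pathwise implication I would fix $p\in A$ and induct on $i$ to show $\rejected_i(p,\pi)\subset\mathcal{S}$. The base case $\rejected_0(p,\pi)=\emptyset$ is immediate. For the inductive step, assume $\rejected_{i-1}(p,\pi)\subset\mathcal{S}$. Since $\rejected_i(p,\pi)=\rejected_{i-1}(p,\pi)\cup\{v:\ p_v\le\pi(v,\rejected_{i-1}(p,\pi))\}$ and the first term already lies in $\mathcal{S}$, it remains to rule out that some $v\notin\mathcal{S}$ enters the second term. Fix such a $v$. By the inductive hypothesis $\mathcal{S}\supseteq\rejected_{i-1}(p,\pi)$ and $\mathcal{S}\in\mathbb{M}$, so the monotonicity condition (at index $i-1$, with this $p$, this $\mathcal{S}$, and this $v\notin\mathcal{S}$) yields $\pi(v,\rejected_{i-1}(p,\pi))\le\pi(v,\mathcal{S})$. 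Meanwhile $p\in A$ together with $v\notin\mathcal{S}$ forces $p_v>\pi(v,\mathcal{S})$: were $p_v\le\pi(v,\mathcal{S})$, then $v$ would belong to $\{w:\ p_w\le\pi(w,\mathcal{S})\}\subset\mathcal{S}$, contradicting $v\notin\mathcal{S}$. Combining, $p_v>\pi(v,\mathcal{S})\ge\pi(v,\rejected_{i-1}(p,\pi))$, so $v$ does not join the second term and $\rejected_i(p,\pi)\subset\mathcal{S}$. Taking the union over $i$ gives $\rejected(p,\pi)=\bigcup_{i=1}^n\rejected_i(p,\pi)\subset\mathcal{S}$.

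The one step I would be careful about is the bookkeeping that licenses the invocation of the monotonicity condition inside the induction: that condition needs both $\mathcal{S}\in\mathbb{M}$ (given) and $\mathcal{S}\supseteq\rejected_{i-1}(p,\pi)$, and the latter is exactly the inductive hypothesis --- which only holds because we have restricted to realizations $p\in A$. In other words, the probabilistic and pathwise parts are interlocked: the single-step condition makes the event $A$ have probability at least $1-\alpha$, while the monotonicity condition is what propagates the containment $\rejected_{i-1}(p,\pi)\subset\mathcal{S}$ forward one step at a time on that event. No further measurability considerations are needed beyond the inclusion $\{p:\ \rejected(p,\pi)\subset\mathcal{S}\}\supseteq A$ noted above. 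Since $\mathcal{S}\in\mathbb{M}$ was arbitrary, this proves the claim for every $\mathcal{S}\in\mathbb{M}$.
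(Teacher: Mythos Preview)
Your proof is correct and follows essentially the same approach as the paper's own proof: fix $\mathcal{S}\in\mathbb{M}$, restrict to the single-step good event, and induct on $i$ using the monotonicity condition to propagate $\rejected_{i-1}\subset\mathcal{S}$ to $\rejected_i\subset\mathcal{S}$. Your write-up is in fact slightly more careful than the paper's in making explicit that the inductive hypothesis is precisely what licenses the invocation of monotonicity.
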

  
\begin{proof}
Consider the event that $\{v:\ p_v \leq \pi(v,\mathcal{S})\} \subset \mathcal{S}$, i.e.
\[
p_v > \pi(v,\mathcal{S}) \qquad \forall v\in \mathcal{\bar S}.
\]   
Here we adopt the notation $\mathcal{\bar S} = \{1 \cdots n\} \backslash \mathcal{S}$.  The single-step condition guarantees that this event occurs with probability at least $1-\alpha$.  Thus to prove our point it suffices to show that $\rejected \subset \mathcal{S}$ whenever this event occurs.  

We argue by induction.  Clearly $\rejected_0 \subset M$.  Now suppose that $\rejected_{i-1} \subset \mathcal{S}$; the monotonicity assumption then yields that $\pi(v,\rejected_{i-1}) \leq \pi(v,\mathcal{S})$.  On the other hand, we have already assumed that $p_v >\pi(v,\mathcal{S})$ for every $v \in \mathcal{\bar S}$.  Together, these facts imply that $p_v \geq \pi(v,\rejected_{i-1})$ for every $v \in \mathcal{\bar S}$.  It follows that
\[
\rejected_{i}=\rejected_{i-1} \cup \{v: p_v \le  \pi(v,\rejected_{i-1})\} \subset \mathcal{S}
\]
By induction, it follows that $\rejected \subset \mathcal{S}$. 
\end{proof}

\cite{meijer:goeman:2015:dag-fwer} provide several weight functions that satisfy the single-step and monotonicity conditions of the Theorem above.  For simplicity, in this work we focus on their ``all-parents'' weight function.  This function produces weights by using a particular ``water-filling'' procedure, described below.

\vspace{.1in}
\begin{algorithm}[H]
  \caption{\label{alg:filling} All-parents water-filling procedure}
  \SetKwInOut{Input}{input}\SetKwInOut{Output}{output}
  \Input{a graph $\mathcal{G}=(\mathcal{V},\mathcal{E}$), a subset $\rejected \subseteq \mathcal{V}$}
  $\mathcal{Z}(\rejected) \gets |\{v\notin \rejected:\ v\mbox{ is a leaf}\}|$\;
  \ForEach{$v\in \mathcal{V}$}{
    \eIf{$v$ is a leaf and $v \notin \rejected$}{
      $g_v \gets 1/\mathcal{Z}$\;
    } { 
      $g_v \gets 0$\;
    }
  }
  \While{there exists a node $v$ such that $g_v\geq 0$ and $\mathrm{parents}(v) \backslash \rejected \neq \emptyset$}{
      \ForEach{$w \in \mathrm{parents}(v)\backslash \rejected$} {
        $g_w \gets g_w + \frac{g_v}{|\mathrm{parents}(v)\backslash \rejected|}$\;
      }
      $g_v \gets 0$\;
  }
  \KwRet{$g$}  
\end{algorithm}
\vspace{.1in}

This algorithm can be equivalently defined via the following recursive relations:
\begin{align*}
  \tilde g_v(\mathcal{G},\rejected) &= \begin{cases}
    0 & \mbox{if $v\in \rejected$} \\
    1/|\{v\notin \rejected:\ v\mbox{ is a leaf of $\mathcal{G}$}\}| & \mbox{if $v\notin \rejected$ is a leaf of $\mathcal{G}$}  \\
    \sum_{w \in \mathrm{children}(w)} \frac{\tilde g_w(\mathcal{G},\rejected)}{|\mathrm{parents}(w)\backslash \rejected|}
      & \mathrm{otherwise}
  \end{cases} \\
  g_v(\mathcal{G},\rejected) &= \begin{cases}
    \tilde g_v(\mathcal{G},\rejected) & \mathrm{if}\ \mathrm{parents}(v)\subset \rejected \\
    0 & \mathrm{otherwise}
  \end{cases}
\end{align*}
Both points of view are helpful in the proofs below.

For any directed acyclic graph $\mathcal{G}$ and any given target error level, $\alpha$, this weight-filling function can be used to produce a satisfactory weight function by taking $\pi(v,\rejected) = \alpha g_v(\mathcal{G},\rejected)$.  This weight function is guaranteed to satisfy the conditions of Goeman and Solari.

\begin{theorem}[Meijer and Goeman]
Fix a directed graph $\mathcal{G}$ and let $\mathbb{M} \subset \mathcal{P}$ denote the set of all collections of hypotheses consistent with that graph (i.e.\ if $\mathcal{S} \in \mathbb{M}$ and $v\in \mathcal{S}$ then all ancestors of $v$ are also in $\mathcal{S}$).  Fix any $\mathcal{S} \subset \mathbb{M}$ and assume $P_\mathcal{S}(p_v\leq c)\leq c, \forall v\notin \mathcal{S}$.  Then the weight function $\pi(v,\rejected) = \alpha g_v(\mathcal{G},\rejected)$ satisfies the single-step and monotonicity conditions of Theorem \ref{thm:goeman-solari}.
\end{theorem}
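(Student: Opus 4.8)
The plan is to verify the two conditions of Theorem~\ref{thm:goeman-solari} directly for the weight function $\pi(v,\rejected) = \alpha g_v(\mathcal{G},\rejected)$. The single-step condition amounts to showing that $\sum_{v\notin\mathcal{S}} g_v(\mathcal{G},\mathcal{S}) \le 1$ for any $\mathcal{S}\in\mathbb{M}$: once this holds, a union bound gives $P_\mathcal{S}\big(\exists v\notin\mathcal{S}: p_v \le \alpha g_v(\mathcal{G},\mathcal{S})\big) \le \sum_{v\notin\mathcal{S}} \alpha g_v(\mathcal{G},\mathcal{S}) \le \alpha$, which is exactly the event $\{v: p_v\le\pi(v,\mathcal{S})\}\subset\mathcal{S}$ failing with probability at most $\alpha$. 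To prove $\sum_{v\notin\mathcal{S}} g_v \le 1$ I would track the ``mass'' assigned during the water-filling procedure of Algorithm~\ref{alg:filling}: the total mass introduced at the leaves is exactly $1$ (there are $\mathcal{Z}(\mathcal{S})$ unrejected leaves each given $1/\mathcal{Z}(\mathcal{S})$), and each iteration of the while-loop is mass-preserving on the quantity $\sum_{v\notin\mathcal{S}} g_v$, since it removes $g_v$ from node $v$ and redistributes exactly $g_v$ among $\mathrm{parents}(v)\setminus\mathcal{S}$. Since all ancestors of unrejected nodes remain unrejected (because $\mathcal{S}\in\mathbb{M}$), no mass ever leaks to a node inside $\mathcal{S}$ when $\rejected=\mathcal{S}$. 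Hence $\sum_{v\notin\mathcal{S}} g_v \le 1$; it may be strictly less than $1$ if the procedure leaves some $\tilde g_v>0$ at a node whose parents are not all rejected, in which case $g_v=0$ by the second case of the definition.

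For the monotonicity condition I must show $g_v(\mathcal{G},\rejected_i) \le g_v(\mathcal{G},\mathcal{S})$ whenever $\rejected_i\subseteq\mathcal{S}$, $\mathcal{S}\in\mathbb{M}$, and $v\notin\mathcal{S}$. I would argue by induction on the DAG using the recursive characterization of $\tilde g$, ordered from leaves upward (or equivalently from the sinks toward the sources along reverse edges). The base case is an unrejected leaf $v$: $\tilde g_v(\mathcal{G},\rejected_i) = 1/|\{\text{unrejected leaves under }\rejected_i\}|$, and since $\rejected_i\subseteq\mathcal{S}$ the set of unrejected leaves only shrinks as we pass to $\mathcal{S}$, so this quantity can only increase. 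For the inductive step at an internal unrejected node $v$, $\tilde g_v$ is a sum over children $w$ of $\tilde g_w / |\mathrm{parents}(w)\setminus\rejected|$; each $\tilde g_w$ increases by induction (each such $w$ is an unrejected-under-$\mathcal{S}$ node since $v\notin\mathcal{S}$ and... actually here one must be careful: a child $w$ of $v$ could lie in $\mathcal{S}$, but then by the case split $g$-contributions are handled appropriately), while $|\mathrm{parents}(w)\setminus\rejected|$ decreases as $\rejected$ grows, so the reciprocal increases; hence $\tilde g_v$ increases. Finally, for the ``$g$ vs.\ $\tilde g$'' passage: since $v\notin\mathcal{S}$ and $\mathcal{S}\in\mathbb{M}$, if all parents of $v$ are in $\rejected_i$ then certainly all parents of $v$ are in $\mathcal{S}$, so $g_v(\mathcal{G},\mathcal{S})=\tilde g_v(\mathcal{G},\mathcal{S}) \ge \tilde g_v(\mathcal{G},\rejected_i) = g_v(\mathcal{G},\rejected_i)$; and if some parent of $v$ is not in $\rejected_i$ then $g_v(\mathcal{G},\rejected_i)=0 \le g_v(\mathcal{G},\mathcal{S})$. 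Either way monotonicity holds.

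The main obstacle I anticipate is the bookkeeping in the inductive step for monotonicity, specifically handling children $w$ of $v$ that belong to $\mathcal{S}\setminus\rejected_i$: such a $w$ is rejected under $\mathcal{S}$ but not under $\rejected_i$, so its role in the two recursions differs, and one has to check that the mass $w$ would have passed up to $v$ under $\rejected_i$ is dominated by what $v$ receives under $\mathcal{S}$ (intuitively, when $w$ gets rejected its mass is itself pushed further up toward $v$ and the other common ancestors, so nothing is lost). Making this precise likely requires strengthening the induction hypothesis to a statement about partial sums of $\tilde g$ over suitable ``upper sets'' of the DAG, or invoking the mass-conservation argument from the single-step part in a localized form. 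The leaf-count base case and the $g$-vs-$\tilde g$ reduction are routine; the interaction between the two rejection sets at internal nodes is where the real care is needed. Since Meijer and Goeman already established this, I would follow their sequential-rejection bookkeeping, and I would not reproduce every algebraic identity here.
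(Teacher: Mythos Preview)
Your approach matches the paper's exactly: mass conservation plus a union bound for the single-step condition, and bottom-up induction on $\tilde g$ for monotonicity, followed by the $g$-vs-$\tilde g$ reduction. There is, however, one point where you have the closure direction backwards and this causes you to anticipate an obstacle that does not exist.

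For the single-step part, your claim that ``all ancestors of unrejected nodes remain unrejected (because $\mathcal{S}\in\mathbb{M}$)'' is the wrong direction. The definition of $\mathbb{M}$ says $\mathcal{S}$ is closed under \emph{ancestors}; equivalently, $\bar{\mathcal{S}}$ is closed under \emph{descendants}. An unrejected node can certainly have rejected ancestors. This slip is harmless for the single-step bound, since Algorithm~\ref{alg:filling} only pushes mass to $\mathrm{parents}(v)\setminus\rejected$ by construction, so no mass leaks into $\mathcal{S}$ regardless. The paper simply observes $\sum_v g_v = 1$ is invariant and applies the union bound.

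For monotonicity, the same reversal leads you to worry about a phantom case. You flag as the ``main obstacle'' the possibility that a child $w$ of $v$ lies in $\mathcal{S}\setminus\rejected_i$ when $v\notin\mathcal{S}$. But this cannot happen: if $w\in\mathcal{S}$, then since $v$ is an ancestor of $w$ and $\mathcal{S}\in\mathbb{M}$, we would have $v\in\mathcal{S}$, a contradiction. So every child of $v\notin\mathcal{S}$ also lies outside $\mathcal{S}$, and the inductive hypothesis applies to \emph{all} terms in the recursion for $\tilde g_v$. The paper makes exactly this observation (``$\mathcal{S}\in\mathbb{M}$ and $w\notin\mathcal{S}$; it follows that all of the children of $w$ also lie outside of $\mathcal{S}$'') and the inductive step is then immediate: each $\tilde g_w$ increases by induction and each $|\mathrm{parents}(w)\setminus\rejected|$ decreases, so $\tilde g_v$ increases. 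No strengthened hypothesis or localized mass-conservation argument is needed.
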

\begin{proof}
  Let us begin with the monotonicity condition. This monotonicity condition follows from the corresponding monotonicity property on the water-filling function $g$, defined above.   Let us fix any $i,p$ and assume  $\rejected_i(p,\pi) \subset \mathcal{S}$.  
  \begin{itemize}
  \item If $v\notin \mathcal{S}$ is a leaf, then $\tilde g_v(\mathcal{G},\rejected_i),\tilde g_v(\mathcal{G},\mathcal{S})$ denote the reciprocal of the number of unrejected leaves in each of their respective cases. Since $S\subset S'$ it follows that $\tilde g_v(\mathcal{G},\rejected_i)\leq \tilde g_v(\mathcal{G},\mathcal{S})$ for all $v \notin \mathcal{S}$.
  \item Now consider the case that $w \notin \mathcal{S}$ isn't a leaf.   Let us apply the inductive hypothesis that $\tilde g_v(\mathcal{G},\rejected_i)\leq \tilde g_v(\mathcal{G},\mathcal{S})$ for every child $v \notin \mathcal{S}$.  Now note that $\mathcal{S}\in \mathbb{M}$ and $w \notin \mathcal{S}$; it follows that all of the children of $w$ also lie outside of $\mathcal{S}$.  Thus, in fact, since all children of $w$ lie outside $\mathcal{S}$, we have that $\tilde g_v(\mathcal{G},\rejected_i)\leq \tilde g_v(\mathcal{G},\mathcal{S})$ for every child $v$.  We obtain that
  \begin{align*}
  \tilde g_w(\mathcal{G},\rejected_i) & =\sum_{v\in\mathrm{children}(w)}\frac{\tilde g_v(\mathcal{G},\rejected_i)}{\left|\mathrm{parents}(v)\backslash \rejected_i\right|}\leq
  \sum_{v\in\text{children}(w)}\frac{\tilde g_v(\mathcal{G},\rejected_i)}{\left|\text{parents}(v)\backslash\mathcal{S}\right|}\leq \tilde g_w(\mathcal{G},\mathcal{S})
  \end{align*}
  \end{itemize}
  Putting these two facts together and applying an induction argument on the nodes in topological order from the bottom of the tree, it follows that $\tilde g_v(\mathcal{G},\rejected_i) \leq \tilde g_v(\mathcal{G},\mathcal{S})$ for every $v\notin \mathcal{S}$.  Finally, observe that
  \[
    \mathrm{parents}(v) \subset \rejected_i \implies \mathrm{parents}(v) \subset \mathcal{S}
  \]
  In combination with the inequality on $\tilde g$, this yields that $g_v(\mathcal{G},\rejected_i) \leq g_v(\mathcal{G},\mathcal{S})$ for all $v\notin \mathcal{S}$, as desired.  

  Now let us turn to the single-step condition.  First observe that the water-filling function $g$ always satisfies $\sum_v g_v =1$.  Indeed, observe that Algorithm \ref{alg:filling} initializes the values of $g_v$ to satisfy this property, and the total value of $\sum_v g_v$ is unchanged at each iteration of the while loop.  The single-step condition then follows from a union bound:
  \begin{align*}
  \mathrm{pr}\left(\{v: p_v \le \pi(v,\mathcal{S})\}  \subseteq \mathcal{S}\right) 
      &= \mathrm{pr}(p_v > \alpha g_v(\mathcal{G},\mathcal{S})\  \forall v \in \bar{\mathcal{S}})  \\
      &\geq 1- \sum_{v \in \bar{\mathcal{S}}} \mathrm{pr}(p_v \leq \alpha g_v(\mathcal{G},\mathcal{S}))\\
      &\geq 1 - \sum_{v \in \bar{\mathcal{S}}} \alpha g_v(\mathcal{G},\mathcal{S}) \geq 1-\alpha.
  \end{align*}
\end{proof}

\subsection{False exceedance control}\label{subapp:fdx}

We combine the method of \citet{meijer:goeman:2015:dag-fwer} and the generic procedure of \cite{genovese:wasserman:2006:fdx}, which turns any familywise error rate controlling method into a false exceedance controlling method. The procedure is outlined in Alg. \ref{alg:fdx}. Here $\texttt{topological\_sort}$ refers to a topological ordering of the vertices of the graph such that any node appears later than its parents \citep[][Section 22.4]{cormen:2001:toposort}.

% !TEX root = biometrika/main.tex
\begin{algorithm}[H]
\caption{\label{alg:fdx} Hybrid method to control \fdx{}}
\SetKwInOut{Input}{input}\SetKwInOut{Output}{output}
\Input{Smoothed $p$-values $\td{p} = (\td{p}_1, \ldots, \td{p}_n)$, directed acyclic graph $\mathcal{G}$, exceedance level $\gamma, \alpha$}
$\rejected_0 \leftarrow \text{Algorithm 1 with inputs } \td{p}, \mathcal{G}, \alpha$\;
$\rejected'\leftarrow \text{first }\lfloor |\rejected|\gamma / (1 - \gamma)\rfloor \text{ elements of }\mathrm{topological\_sort}(\mathcal{G}\setminus \rejected_0)$\;
\KwRet{$\rejected_0\cup \rejected'$}

% $\hat{\mathcal{S}}_{\mathrm{FDX}(\alpha,\gamma)} \leftarrow \hat{\mathcal{S}}_{\mathrm{FWER}(\alpha)}$\;
% \For {$v \in \mathrm{topological\_sort}(\mathcal{G})$}{
%     \If{$\gamma \geq |\hat{\mathcal{S}}_{\mathrm{FDX}(\alpha,\gamma)} \cup \{v\} \backslash \hat{\mathcal{S}}_{\mathrm{FWER}(\alpha)}|/|\hat{\mathcal{S}}_{\mathrm{FDX}(\alpha,\gamma)} \cup \{v\}|$}{
%         $\hat{\mathcal{S}}_{\mathrm{FDX}(\alpha,\gamma)} \gets \hat{\mathcal{S}}_{\mathrm{FDX}(\alpha,\gamma)}  \cup \{v\}$
%     }
%     \Else{break}
% }
% \KwRet{$\hat{\mathcal{S}}_{\mathrm{FDX}(\alpha,\gamma)}$}
\end{algorithm}
%%% Local Variables:
%%% mode: latex
%%% TeX-master: "supplementary"
%%% End:

% \switchforrestateable{fdx}
% \begin{proof}
% Let $E$ denote the event that every discovery in $\hat{\mathcal{S}}_{\mathrm{FWER}(\alpha)}$ is valid.  When this happens, the number of false discoveries in $\hat{\mathcal{S}}_{\mathrm{FDX}(\gamma,\alpha)}$ is at most $|\hat{\mathcal{S}}_{\mathrm{FDX}(\gamma,\alpha)} \backslash \hat{\mathcal{S}}_{\mathrm{FWER}(\alpha)}|$.  The corresponding false discovery proportion is thus at most $|\hat{\mathcal{S}}_{\mathrm{FDX}(\gamma,\alpha)} \backslash \hat{\mathcal{S}}_{\mathrm{FWER}(\alpha)}|/|\hat{\mathcal{S}}_{\mathrm{FDX}(\gamma,\alpha)}|$.  If we bound this quantity by $\gamma$, the false discovery proportion is guaranteed to be less than $\gamma$ whenever the event $E$ happens.  Since we have assumed $\hat{\mathcal{S}}_{\mathrm{FWER}(\alpha)}$ controls FWER at the $\alpha$ level, we know that $E$ happens with probability at most $1-\alpha$.  Thus the false discovery proportion exceeds $\gamma$ with probability at most $\alpha$.
% \end{proof}

\subsection{False discovery rate control}

For false discovery rate control, we use the unreshaped version of \DAGGER{} on our $\psmooth$-values. It is a recursive step-up procedure with graph-specific thresholds. In particular, for a directed acyclic graph, the depth of each node $d_v$ is defined as the length of the longest path from $v$ to any root, the node without parents. Let $\mathcal{V}_d = \{v: d_v = d\}$. For each node $v\in \mathcal{V}_d$, let $\{\alpha_{d, v}(r): r = 1, 2, \ldots\}$ be a sequence of thresholds which will be specified later. The procedure starts by applying a generalized step-up procedure on $\mathcal{V}_1$ to obtain an initial rejection set:
\[\rejected_1 = \left\{v\in \mathcal{V}_1: \td{p}_v\le \alpha_{1, v}(R_1)\right\},\]
where
\[R_1 = \max\left\{1\le r\le |\mathcal{V}_1|: \sum_{v\in \mathcal{V}_1}I(\td{p}_v\le \alpha_{1, v}(r))\ge r\right\}.\]
Once $\rejected_1, \ldots, \rejected_{d-1}$ are decided, let
$\mathcal{V}'_d$ be the set of nodes with depth $d$ and all parents rejected, i.e.
\[\mathcal{V}'_d = \left\{v\in \mathcal{V}_d: \texttt{parents}(v)\subset \bigcap_{j=1}^{d-1}\rejected_j\right\}.\]
Then the procedure computes $\rejected_d$ as
\[\rejected_d = \left\{v\in \mathcal{V}'_d: \td{p}_v\le \alpha_{d, v}(R_d)\right\},\]
where
\[R_d = \max\left\{1\le r\le |\mathcal{V}'_d|: \sum_{v\in \mathcal{V}'_d}I(\td{p}_v\le \alpha_{d, v}(r))\ge r\right\}.\]
The final rejection set of \DAGGER{} is then
\[\rejected = \bigcup_{d=1}^{D}\rejected_d,\]
where $D$ is the maximal depth. By definition, every node in $\rejected_d$ has all their parents rejected in $\rejected_1\cup \ldots \cup \rejected_{d-1}$. Therefore, $\rejected$ obeys the logical constraint.

To define the threshold $\alpha_{d, v}(r)$, \cite{ramdas:etal:2019:dagger} define the effective number of leaves $\ell_v$ and the effective number of nodes $m_v$ for each node $v$. Similar to the weight in Meijer-Goeman algorithm, $\ell_v$ and $m_v$ are computed via a ``water-filling'' algorithm in a bottom-up fashion. Specifically, they set $\ell_v = m_v = 1$  for each leaf node and then proceed up the tree, from leaves to roots, recursively calculating
\[\ell_v = \sum_{w\in \texttt{children}(v)}\frac{\ell_w}{|\texttt{parents}(w)|}, \quad m_v = 1 + \sum_{w\in \texttt{children}(v)}\frac{m_w}{|\texttt{parents}(w)|}.\]
Let $L$ is the total number of leaves. Then
\[\alpha_{d, v}(r) = \alpha \frac{\ell_v}{L}\frac{m_v + r + \sum_{j=1}^{d-1}R_j - 1}{m_v}.\]

\subsection{Dependency}

\label{subapp:depproofs}

Let $\Phi^{-1}$ denote the quantile function of the standard normal.  We here consider the case that $Z_v = \Phi^{-1}(p_v)$ satisfies $Z_{\bar{\mathcal{S}}} \sim \mathcal{N}(0,R)$ for some correlation matrix $R$.  In this case some of the methods presented in the main text may fail due to the correlation between the $p$-values.  However, we can account for this dependency, even if we do not know $R$.  Throughout this section we consider this case, and apply what we call conservative Stouffer smoothing.
\[
\psmooth_v \gets 
    \begin{cases}
        1 & \mathrm{if } \sum_{w \in \mathcal{C}_v} \pi_{vw} Z_w\geq 0 \\
        \Phi\left(\sum_{w \in \mathcal{C}_v} \pi_{vw} Z_w\right) & \mathrm{otherwise},
    \end{cases}
\]
where $\Phi$ denotes the standard normal CDF and $\pi$ satisfies $\pi_{vw}\geq 0,\sum_{w \in \mathcal{C}_v} \pi_{vw}=1$. As in the independent case, the existing techniques of \citet{meijer:goeman:2015:dag-fwer} and \citet{ramdas:etal:2019:dagger} can be now be applied to these smoothed values.  The target error rate will still be controlled, even though the null $p$-values are no longer independent.

\switchforrestateable{mvnfwer}
\begin{proof}
    Fix any $v \in \bar{\mathcal{S}}$.  Let $Y_v = \sum_{w \in \mathcal{C}_v} \pi_{vw} Z_w$.  The assumption of Gaussian copula tells us that $Y_v \sim \mathcal{N}(0,\sigma^2)$.  We can get an upper bound for $\sigma^2$ using the observation that $R_{w,w'} \in [-1,1]$ and $\pi$ lies on the simplex:
    \begin{align*}
    \mathrm{var}(Y_v) &= \sum_w \pi_{w} \left(\sum_{w'} R_{w,w'}\pi_{w'}\right) 
                    \leq \sum_{w} \pi_{w}(1) = 1.
    \end{align*}
    Thus $\sigma \leq 1$.  We can now prove the Lemma, considering three different cases.
    First, let $\alpha \leq 1/2$.  Then
            \begin{align*}
                \mathrm{pr}(\psmooth_v \leq \alpha) 
                    & = \mathrm{pr}(Y_v \leq \Phi^{-1}(\alpha)) = \Phi(\Psi^{-1}(\alpha)/\sigma) \\
                    & \leq \Phi(\Phi^{-1}(\alpha)) = \alpha.
            \end{align*}
            This only works because we can assume $\Phi^{-1}(\alpha)\leq 0$; otherwise the inequality goes the other way.
    Second, let $1/2 < \alpha < 1$.  Then
            \begin{align*}
                \mathrm{pr}(\psmooth_v \leq \alpha) 
                    &= \mathrm{pr}(\psmooth_v \leq 1/2) + \mathrm{pr}(\psmooth_v\in (1/2,\alpha)) \\ 
                    & = \mathrm{pr}(\psmooth_v \leq 1/2) + 0 \leq 1/2 \leq \alpha.
            \end{align*}
     Finally, let $\alpha=1$.  Then $\mathrm{pr}(\psmooth_v \leq 1)=1=\alpha$.
\end{proof}

\switchforrestateable{mvnfdr}

\begin{proof} As in the previous proof, let $Y_v = \sum_{w \in \mathcal{C}_v} \pi_{vw} Z_w$.  Note that $Y_{\bar{\mathcal{S}}}$ is jointly Gaussian.  Since $R$ has no negative entries and $\pi$ is also non-negative, the covariance of $Y_{\bar{\mathcal{S}}}$ is positive.  It follows that $Y$ are \PRDSadj{} \citep{benjamini:yekutieli:2001:fdr-dependence}.  Moreover, $\psmooth$ is an elementwise monotone nondecreasing transformation of $Y$; each $\psmooth_v$ can be expressed as a monotone nondecreasing function of $Y_v$; it follows that $\psmooth$ are also \PRDSadj{}.\end{proof}

\section{Additional experimental results}

\subsection{Dependent null statistics}

\label{subapp:depexperiments}

\begin{figure}
  \includegraphics[width=\linewidth]{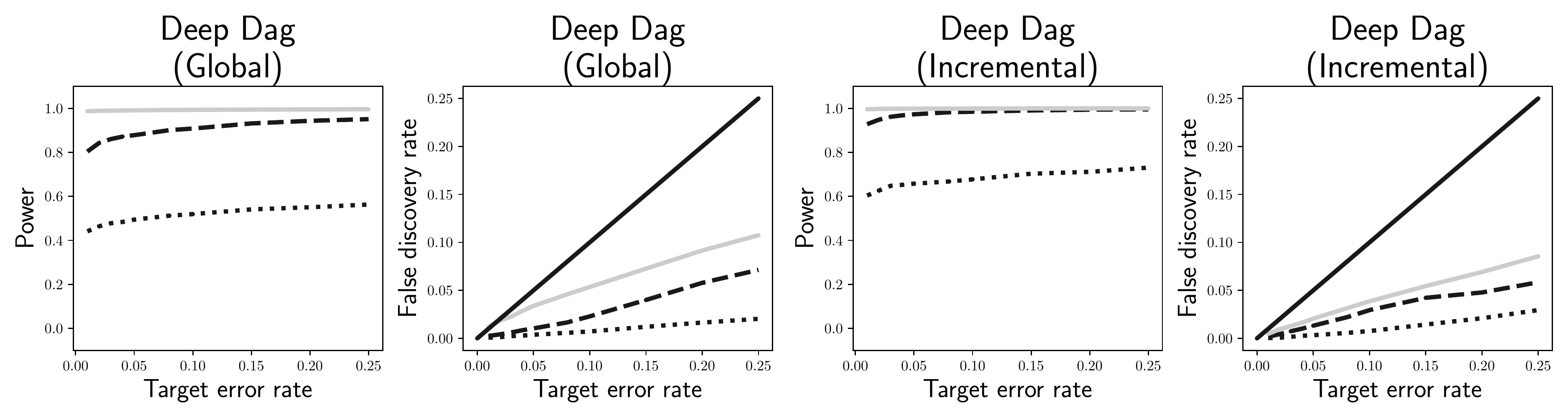}
  \caption{\label{fig:dependency-fdr} Power and empirical error rate for three different multiple hypothesis testing methods on two different simulations.  Gray lines indicate a Fisher smoothing version of DAGGER, dashed lines indicate a conservative Stouffer smoothing version of DAGGER, dotted lines indicate DAGGER, and the solid black line indicates the maximum allowable error.}
  \end{figure}

\begin{figure}
  \includegraphics[width=\linewidth]{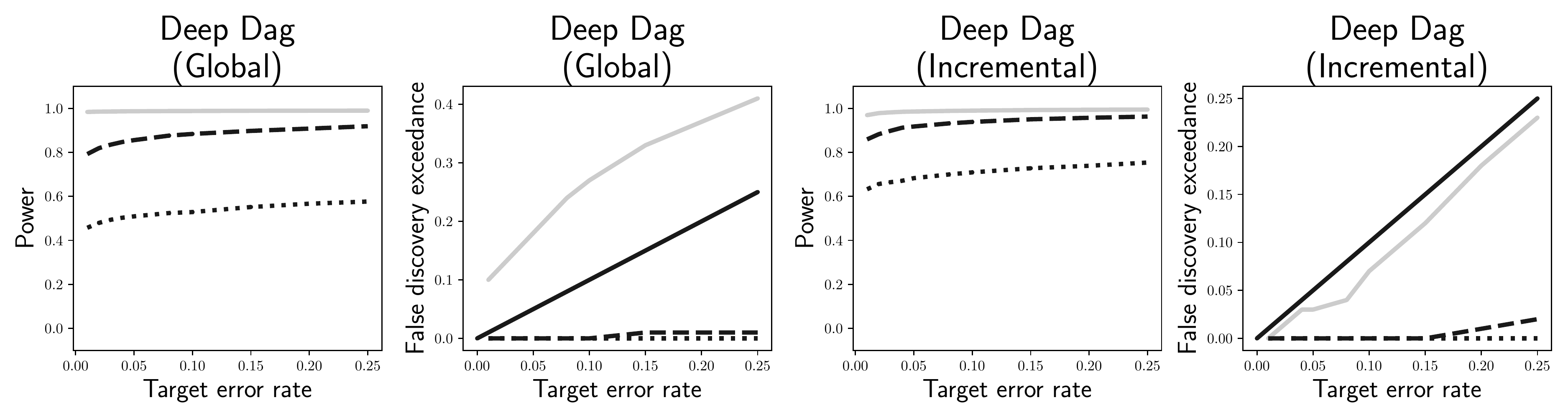}
  \caption{\label{fig:dependency-fdx} Power and empirical error rate for three different multiple hypothesis testing methods on two different simulations.  Gray lines indicate a Fisher smoothing version of the FDX extension of Meijer and Goeman, dashed lines indicate a conservative Stouffer smoothing version of the FDX extension of Meijer and Goeman, dotted lines indicate the FDX extension of Meijer and Goeman, and the solid black line indicates the maximum allowable error.}
  \end{figure}

Here we perform numerical experiments to investigate the performance of various methods in the presence of dependent null statistics with Gaussian copulas.  Several extant methods can provably control error in this scenario:

\begin{itemize}
  \item The estimators of Meijer and Goeman are still guaranteed to control familywise error.
  \item It follows that the FDX extension of \cite{meijer:goeman:2015:dag-fwer} (proposed in Algorithm \ref{alg:meijer}) still guarantees false discovery exceedance.  
  \item The DAGGER estimator is still guaranteed to control false discovery rate -- as long as the null statistics have nonnegative correlations. 
\end{itemize}

% % !TEX root = arxiv/main.tex
\begin{algorithm}[H]
\SetKwInOut{Input}{input}\SetKwInOut{Output}{output}
\Input{smoothed $p$-values $\td{p} = (\td{p}_1, \ldots, \td{p}_n)$, directed acyclic graph $\mathcal{G}$, target level $\alpha$}
$\rejected\gets \emptyset$\;
\Repeat{$\rejected$ does not change}{
  $(\pi_v)_{v\not\in \rejected}\gets$  Algorithm~\ref{alg:filling}\;
  $\rejected\gets \rejected\cup \{v\not\in \rejected: \texttt{parents}(v)\subset \rejected  \texttt{ and }p_v \le \alpha\pi_v\}$
}
\KwRet{$\rejected$}
% \For {$v \in \mathrm{topological\_sort}(\mathcal{G})$}{
%     \If{$\gamma \geq |\hat{\mathcal{S}}_{\mathrm{FDX}(\alpha,\gamma)} \cup \{v\} \backslash \hat{\mathcal{S}}_{\mathrm{FWER}(\alpha)}|/|\hat{\mathcal{S}}_{\mathrm{FDX}(\alpha,\gamma)} \cup \{v\}|$}{
%         $\hat{\mathcal{S}}_{\mathrm{FDX}(\alpha,\gamma)} \gets \hat{\mathcal{S}}_{\mathrm{FDX}(\alpha,\gamma)}  \cup \{v\}$
%     }
%     \Else{break}
% }
% \KwRet{$\hat{\mathcal{S}}_{\mathrm{FDX}(\alpha,\gamma)}$}
\caption{\label{alg:meijer} Meijer-Goeman procedure}
\end{algorithm}

Smoothing techniques can be used with each of these methods, yielding improved power in some cases.  We will consider two smoothing methods.
\begin{enumerate}
  \item The conservative Stouffer smoothing method:
  \[
  \psmooth_v \gets 
      \begin{cases}
          1 & \mathrm{if } \sum_{w \in \mathcal{C}_v} \pi_{vw} \Phi^{-1}(p_w)\geq 0 \\
          \Phi\left(\sum_{w \in \mathcal{C}_v} \pi_{vw} \Phi^{-1}(p_w)\right) & \mathrm{otherwise}.
      \end{cases}
  \]
  where $\Phi$ indicates the cumulative distribution function for a standard normal distribution.  In the experiments here, we focus on a specific choice for $\pi$, namely one which averages each node with its direct children:
  \begin{align*}
    \pi_{vw} &\propto \begin{cases}
      1 & \mathrm{if}\ v=w \\
      1 & \mathrm{if}\ v\rightarrow w \\
      0 & \mathrm{otherwise}
    \end{cases} \\
    \sum_w \pi_{vw} & = 1
  \end{align*}
  \item Fisher smoothing:
  \[
  \psmooth_v \gets 1-F_{\chi^2_{|2\mathcal{C}_v|}}\left(-2\sum_{w \in \mathcal{C}_v} \log p_w\right)
  \]
  where $F_{\chi^2_d}$ indicates the cumulative distribution function for a $\chi^2$ distribution with $d$ degrees of freedom. 
\end{enumerate}
Hypothesis testing can be performed by applying existing methods directly to the smoothed values.  As described in the main text and proved in \ref{subapp:generic}, the conservative Stouffer smoothing versions of several methods provably control type I errors:

\begin{itemize}
  \item The estimators of Meijer and Goeman applied to conservative Stouffer smoothed statistics are still guaranteed to control familywise error.
  \item The FDX extension of Meijer of Goeman applied to conservative Stouffer smoothed statistics still guarantees false discovery exceedance.  
  \item The DAGGER estimator applied to conservative Stouffer smoothed statistics is still guaranteed to control false discovery rate -- as long as the null statistics have nonnegative correlations.
\end{itemize}

To quantify the performance of these smoothing-based algorithms, we created simulated datasets in which the ground truth was known.  The simulated data was created in three stages:

\begin{enumerate}
  \item We first designed a five-layer directed acyclic graph structure with 50 nodes at each layer.  Each node has three randomly-selected parents from the layer above it.  
  \item We then selected which nodes would be considered nonnull and sampled a value for each of the nonnull nodes.  We performed this selection and sampling using two different methods:
    \begin{itemize}
        \item Global alternative. The value at each nonnull node is $\mathrm{Beta}(\exp(-4),0.5)$. \DAG{}s are populated starting at the leaves and null nodes are flipped to nonnull with probability $0.2$.  
        \item Incremental alternative. The value at each nonnull node is $\mathrm{Beta}(\exp(-4-0.3\times(D-d)),0.5)$, where $d$ is the depth of the node and $D$ is the maximum depth of the \DAG{}. The graph is populated starting at the leaves with nonnull probability $0.2$ and internal nodes are intersection hypotheses that are null if and only if all their child nodes are null.
    \end{itemize}
  \item Finally, we sampled values for the null nodes.  We constructed a Gaussian process on the graph structure:
  \[
  Z_v|Z_{\mathcal{A}_v} \sim \mathcal{N}\left(
    \frac{1}{|\mathcal{P}_v \cap \mathcal{\bar S}|} \sum_{w \in \mathcal{P}_v\cap \mathcal{\bar S}}Z_{w},
    1,
  \right)
  \]
  Here $\mathcal{P}_v$ denotes the direct parents of the node $v$ and $\mathcal{A}_v$ denotes all ancestors of $v$ in the graph.  We then computed marginally uniform values for each node; letting $F_v(c) = \mathrm{pr}(Z_v \leq c)$, we took $p_v = F_v(Z_v)$.
\end{enumerate}

For each type of alternative (global and incremental) we constructed 100 trials (yielding 200 trials in all).  We applied six different hypothesis testing methods to each trial and calculated the power and empirical error rates of each method.  The six methods are listed below.

\begin{itemize}
  \item DAGGER (which provably controls the false discovery rate in this setting).
  \item A Fisher smoothing version of DAGGER (which is not guaranteed to control the false discovery rate).
  \item A conservative Stouffer version of DAGGER (which provably controls the false discovery rate).
  \item The FDX extension of Meijer and Goeman (which provably controls the false discovery exceedance), tuned to limit the false discovery proportion below 10\%.
  \item A Fisher smoothing version of the FDX extension of Meijer and Goeman (which is not guaranteed to control false discovery exceedance), tuned to limit the false discovery proportion below 10\%.
  \item A conservative Stouffer version of the FDX extension of Meijer and Goeman (which provably controls the false discovery exceedance).
\end{itemize}

The results are shown in Figure \ref{fig:dependency-fdr} and Figure \ref{fig:dependency-fdx}.  In each simulation we find that that the smoothed versions of algorithms have higher power than the original unsmoothed versions.  However, as expected, the Fisher smoothing versions occasionally violate their target error rates.  For example, the Fisher smoothing version of the FDX extension of Meijer and Goeman has an empirical false discovery exceedance which is roughly double its target error rate when applied to simulations based on the global alternative scheme.

\subsection{Intuitions for achieving the greatest power with smoothing techniques}

\label{subapp:badexperiments}

There are some cases in which some smoothed versions of extant algorithms have less power than the corresponding unsmoothed versions.  Trouble will arise if the smoothed value for a nonnull hypothesis is heavily influenced by null $p$-values.  However, there are many kinds of smoothing, and in most cases we found that there is \emph{some} form of smoothing which yields higher power.  Ideally the user may use a-priori knowledge to choose a smoothing method wisely.  For example, if user suspects many of the null hypotheses to be correct, it may be unwise to include smoothing functions which are heavily influenced by the $p$-values for those hypotheses.  

We designed a numerical experiment to demonstrate the power of different smoothing methods in different contexts.  We hope this may help guide users in their thinking about what smoothing techniques may be appropriate for different kinds of data.  We created four simulated datasets.  Each was created in three stages:

\begin{enumerate}
  \item We first designed a five-layer directed acyclic graph structure with 50 nodes at each layer.  Each node has three randomly-selected parents from the layer above it.  
  \item We then selected which nodes would be considered nonnull:  for the first simulation only the nodes in the first layer were considered nonnull, for the the second simulation only the nodes in the first two layers were considered nonnull, and so-on.  
  \item We then sampled a value for each node.  Values at nonnull nodes were sampled according to $\mathrm{Beta}(0.1,0.5)$ and values at null nodes were sampled uniformly between 0 and 1.
\end{enumerate}

We then ran several different hypothesis selection algorithms on each of the four simulations.  We investigated three types of smoothing:
\begin{enumerate}
  \item No smoothing, i.e. with smoothing function $f_v(p)=p_v$
  \item Fisher smoothing using only direct children, i.e. with smoothing function
  \[
    f_v(p)=-2\log p_v -\sum_{w:\ v\in \mathcal{P}_w} 2\log p_w,
  \]
  where $\mathcal{P}_w$ indicates the parents of node $w$.
  \item Fisher smoothing using all children, i.e. with smoothing function
  \[
    f_v(p)=\sum_{c \in \mathcal{C}_v} 2\log p_c.
  \]
\end{enumerate}
We investigated two classes of hypothesis selection algorithms: the DAGGER method and the FDX extension of Meijer and Goeman.  In total, this yielded six hypothesis selection algorithms:
\begin{enumerate}
  \item the DAGGER method, 
  \item the FDX extension of Meijer and Goeman,
  \item a Fisher smoothing version of the DAGGER method which only uses the direct children of each node, 
  \item a Fisher smoothing version of the FDX extension of Meijer and Goeman which only uses the direct children of each node, 
  \item a Fisher smoothing version of the DAGGER method which uses all descendants of each node, and
  \item a Fisher smoothing version of the FDX extension of Meijer and Goeman which uses all descendants of each node.
\end{enumerate}
The DAGGER-based algorithms were tuned to limit the false discovery rate to at most 5\%.  The Meijer and Goeman algorithms were tuned to limit the false discovery proportion below 10\% with probability at least 95\%.

\begin{figure}
  \begin{minipage}[c]{.45\linewidth}
  \includegraphics[width=\linewidth]{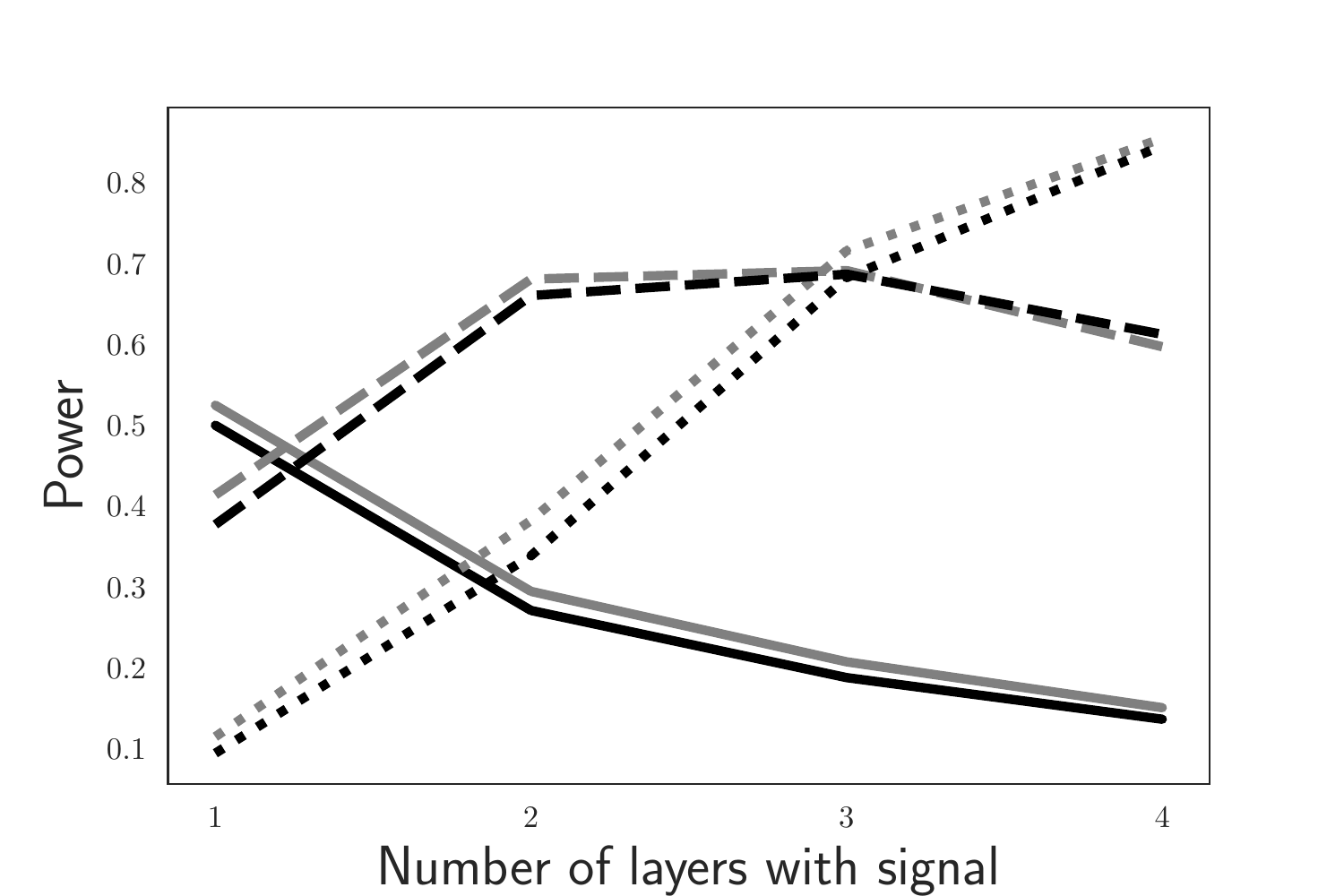}
  \end{minipage}\hfill
  \begin{minipage}[c]{.52\linewidth}
    \caption{\label{fig:badpic} To get the highest possible power, the hypothesis selection algorithm must be chosen differently for different simulation types.  Gray lines indicate algorithms based on Dagger and black lines indicate algorithms based on Meijer and Goeman.  Solid lines indicate the original algorithms, dashed lines indicate smoothed versions of those algorithms which use information from each node along with its direct descendants, and dotted lines indicate smoothed versions which use information from all descendants of each node.  In simulations where more layers include nonnull hypothesis, algorithms with more smoothing perform better.}
  \end{minipage}
  \end{figure} 

The results are shown in Figure \ref{fig:badpic}.  To get the highest possible power, the hypothesis selection algorithm must be chosen differently for different simulation types.  When the hypotheses in layers 2-5 are all correct, the best choices would be methods 1 and 2 from the list above (i.e. algorithms without any smoothing).  When only the hypotheses in layers 3-5 are correct, methods 3 and 4 are best (i.e. algorithms which involve only a local smoothing over the direct children of each node).  When only the hypotheses in layers 4-5 are correct, methods 5 and 6 are best (i.e. algorithms which smooth over all descendants of each node).

In summary, smoothing methods can help most when the smoothed values for nonnull hypotheses are most heavily influenced by $p$-values from other nonnull hypothesis.  It follows that the user should select smoothing functions such that this property holds as often as possible.  Unfortunately, this may be difficult to do because the user does not know a-priori which hypotheses are correct and which are not.  We caution that some users may feel tempted to choose smoothing techniques based on the data itself; we emphasize such an approach is completely unacceptable unless the user has sufficient data to create a clean split between data used to choose the hypothesis selection algorithm and data used to conduct the final hypothesis tests.  A deeper investigation of how users should make these decisions is merited, but beyond the scope of the present work.
 
\label{subapp:bad}

%%% Local Variables:
%%% mode: latex
%%% TeX-master: "testing"
%%% End:

\end{document}